\newlength\listingnumberwidth
\newcommand{\naf}[1]{\ensuremath{{\sim}{#1}}} 
\newcommand{\minimize}[0]{\ensuremath{\mathtt{\#minimize}}}
\newcommand{\summ}[0]{\ensuremath{\mathtt{\#sum}}}
\newcommand{\head}[1]{\ensuremath{\mathit{head}(#1)}} 
\newcommand{\atom}[1]{\ensuremath{\mathit{atom}(#1)}} 
\newcommand{\body}[1]{\ensuremath{\mathit{body}(#1)}} 
\newcommand{\pbody}[1]{\poslits{\body{#1}}} 
\newcommand{\poslits}[1]{\ensuremath{#1^+}}
\newcommand{\reduct}[2]{\ensuremath{#1^{#2}}}
\newcommand{\tp}{\ensuremath{\mathcal{T}}}
\newcommand{\tps}[2]{\ensuremath{\tp_{\,#1}(#2)}}
\newcommand{\tpi}[3]{\ensuremath{\tp_{\,#1}^{\,#3}(#2)}}
\newcommand{\claspD}{\textit{claspD}}
\newcommand{\clasp}{\textit{clasp}}
\newcommand{\gringo}{\textit{gringo}}
\newcommand{\psmodels}{\textit{psmodels}}
\newcommand{\dlv}{\textit{dlv}}
\newcommand{\lparse}{\textit{lparse}}
\newcommand{\smodels}{\textit{smodels}}
\newtheorem{proposition}{Proposition}
\newtheorem{theorem}{Theorem}
\newtheorem{lemma}{Lemma}
\newcommand{\SPACY}{\par}
\newcommand{\EXT}[2]{#2} 
\title{Complex Optimization in Answer Set Programming\EXT{\\--- Extended Version ---}{}}
\author[Martin Gebser
  \and
  Roland Kaminski
  \and
  Torsten Schaub]{%
  Martin Gebser
  and
  Roland Kaminski
  and
  Torsten Schaub\thanks{Affiliated with Simon Fraser University, Canada, and Griffith University, Australia.}%
  \\
  Institut f\"ur Informatik, Universit\"at Potsdam}
\begin{document}

\maketitle

\begin{abstract}
  Preference handling and optimization are indispensable means 
  for addressing non-trivial applications in Answer Set Programming (ASP).
  However, their implementation becomes difficult whenever they bring about
  a significant increase in computational complexity.
  As a consequence,
  existing ASP systems do not offer 
  complex optimization capacities,
  supporting, for instance, inclusion-based minimization or Pareto efficiency.
  Rather, such complex criteria are typically addressed by resorting to dedicated
  modeling techniques, like \textit{saturation}.
  Unlike the ease of common ASP modeling, however,
  these techniques are rather involved and hardly usable by ASP laymen.
  We address this problem by developing a general implementation technique
  by means of meta-programming, thus reusing existing ASP systems to
  capture various forms of qualitative preferences among answer sets.
  In this way, complex preferences and optimization capacities become readily
  available for ASP applications. 
\end{abstract}


\section{Introduction}\label{sec:introduction}

Preferences are often an indispensable means in modeling since they allow for
identifying preferred solutions among all feasible ones.
Accordingly, many forms of preferences have already found their way into
systems for Answer Set Programming (ASP; \cite{baral02a}).
For instance, \smodels\ provides optimization statements for expressing cost
functions on sets of weighted literals \cite{siniso02a}, and
\dlv\ \cite{dlv03a} offers weak constraints for the same purpose.
Further 
approaches
\cite{descto02a,eifalepf03a} allow for
expressing various types of preferences among rules.
Unlike this, no readily applicable implementation techniques are available for
qualitative preferences among answer sets, like inclusion minimality,
Pareto-based preferences as used in \cite{sakino00a,brnisy04a}, or
more complex combinations as proposed in \cite{brewka04b}.
This shortcoming is due to their higher expressiveness leading to a significant
increase in computational complexity, lifting decision problems (for normal logic programs)
from the first to the second level of the polynomial time hierarchy (cf.\ \cite{garjoh79}).
Roughly speaking, preferences among answer sets combine an $\mathit{NP}$ with a
$\mathit{coNP}$ problem.
The first one defines feasible solutions, while the second one ensures that
there are no better solutions according to the preferences at hand.
For implementing such problems, Eiter and Gottlob invented in~\citeyear{eitgot95a}
the \textit{saturation} technique, using the elevated complexity of
disjunctive logic programming.
In stark contrast to the ease of common ASP modeling
(e.g., strategic companies can be ``naturally'' encoded
 \cite{dlv03a} in disjunctive ASP), however,
the saturation technique is rather involved and hardly usable by ASP laymen. 

For taking this burden of intricate modeling off the user,
we propose a general, saturation-based implementation
technique capturing various forms of qualitative preferences among answer sets.
This is driven by the desire to guarantee immediate availability and thus to
stay within the realm of ASP rather than to build separate (imperative) components.
To this end, we take advantage of recent advances in ASP grounding technology,
admitting an easy use of meta-modeling techniques.
The idea is to reinterpret existing optimization statements in order
to express complex preferences among answer sets.
While, for instance in \smodels, the meaning of $\minimize$ is to
compute answer sets incurring minimum costs,
we may alternatively use it for selecting inclusion-minimal ones.
In contrast to the identification of minimal models,
investigated by Janhunen
and Oikarinen in~\citeyear{janoik04a,oikjan08b},
a major challenge lies in guaranteeing the stability property of implicit
counterexamples, which must be more preferred answer sets rather than (arbitrary) models.
For this purpose, we develop a refined meta-program qualifying answer sets as
viable counterexamples.
Unlike the approach of Eiter and Polleres \citeyear{eitpol06a},
our encoding avoids ``guessing'' a level mapping
to describe the formation of a counterexample,
but directly denies models for which there is no such construction.
Notably, our meta-programs apply to (reified) \emph{extended} logic programs \cite{siniso02a},
possibly including choice rules and $\summ$ constraints,
and we are unaware of any existing meta-encoding of their answer sets,
neither as candidates nor as counterexamples refuting optimality.


\section{Background}\label{sec:background}

We consider extended logic programs \cite{siniso02a}
allowing for (proper) disjunctions in heads of rules \cite{gellif91a}.
A \emph{rule}~$r$ is of the following form:
\begin{align*} 
H \leftarrow B_1,\dots,B_m,\naf B_{m+1},\dots,\naf B_n.
\end{align*}
By $\head{r}=H$ and $\body{r}=\{B_1,\dots,B_m,\naf B_{m+1},\dots,\naf B_n\}$,
we denote the \emph{head} and the \emph{body} of~$r$, respectively,
where ``$\naf{}$'' stands for default negation.
The head~$H$ is a disjunction $a_1\vee\dots\vee a_k$ over
\emph{atoms} $a_1,\dots,a_k$, belonging to some alphabet~$\mathcal{A}$,
or a $\summ$ constraint 
$L\, \summ[\ell_1=w_1,\dots,\ell_k=w_k]\, U$.
In the latter, 
$\ell_i=a_i$ or $\ell_i=\naf a_i$ 
is a \emph{literal} and $w_i$ a \emph{non-negative} integer \emph{weight} for $a_i\in\mathcal{A}$ and $1\leq i\leq k$;
$L$ and $U$ are integers providing a lower and an upper bound.
Either or both of $L$ and $U$ can be omitted,
in which case they are identified with the (trivial) bounds $0$ and~$\infty$, respectively.
A rule~$r$ such that $\head{r}=\bot$ ($H$ is the empty disjunction) is
an \emph{integrity constraint}.
Each body component $B_i$ is either an atom or a $\summ$ constraint for $1\leq i\leq n$.
If $\body{r}=\emptyset$,
$r$ is called a \emph{fact},
and we skip ``$\leftarrow$'' when writing facts below.
For a set $\{B_1,\dots,B_m,\naf B_{m+1},\dots,\naf B_n\}$,
a disjunction $a_1\vee\dots\vee a_k$, and a 
$\summ$ constraint $L\, \summ[\ell_1=w_1,\dots,\ell_k=w_k]\, U$,
we let $\poslits{\{B_1,\dots,B_m,\naf B_{m+1},\dots,\naf B_n\}}=\{B_1,\dots,B_m\}$,
$\poslits{(a_1\vee\dots\vee a_k)}=\{a_1,\dots,a_k\}$, and
$\poslits{(L\, \summ[\ell_1=w_1,\dots,\ell_k=w_k]\, U)}=
 [\ell_i=w_i \mid 1\leq i\leq k,\ell_i\in\mathcal{A}]$.
Note that the elements of a $\summ$ constraint
form a multiset, possibly containing duplicates.
For some $S=\{a_1,\dots,a_k\}$ or 
$S=[a_1=w_1,\dots,a_k=w_k]$, we define $\atom{S}=\{a_1,\dots,a_k\}$.

A (Herbrand) \emph{interpretation} is represented by the set~$X\subseteq\mathcal{A}$
of its entailed atoms.
The satisfaction relation ``$\models$'' on rules~$r$ is inductively defined as follows:
\begin{itemize}
\item $X\models\naf B$ if $X\not\models B$,
\item $X\models (a_1\vee\dots\vee a_k)$ if
      $\{a_1,\dots,a_k\}\cap X\neq\emptyset$,
\item $X\models (L\, \summ[\ell_1=w_1,\dots,\ell_k=w_k]\, U)$ if
      $L\leq \sum_{1\leq i\leq k,X\models\ell_i}w_i \leq U$,
\item $X\models \body{r}$ if
      $X\models \ell$ for all $\ell\in\body{r}$, and
\item $X\models r$      if
      $X\models \head{r}$ or
      $X\not\models \body{r}$.
\end{itemize}
A \emph{logic program}~$\Pi$ is a set of rules~$r$,
and $X$ is a \emph{model} of $\Pi$ if $X\models r$ for every $r\in\Pi$.
The reduct of the head~$H$ of a rule~$r$ wrt~$X$ is
$\reduct{H}{X}=\{a_1\vee\dots\vee a_k\}$ if $H=a_1\vee\dots\vee a_k$, and
$\reduct{H}{X}=\atom{\poslits{H}}\cap X$ if $H=L\, \summ[\ell_1=w_1,\dots,\ell_k=w_k]\, U$.
Furthermore, the reduct of some (positive) body element $B\in\poslits{\body{r}}$ is
$\reduct{B}{X}=B$ if $B\in\mathcal{A}$,
and
$\reduct{B}{X}=
 \big(L-\sum_{1\leq i\leq k,\ell_i=\naf a_i,a_i\notin X}w_i\big)\, \summ\:\poslits{B}$
if $B=L\, \summ[\ell_1=w_1,\dots,\ell_k=w_k]\, U$.
The \emph{reduct} of~$\Pi$ wrt~$X$ is the following logic program:
\pagebreak[1]%
\begin{align*}
\reduct{\Pi}{X}=
\big\{
&
H \leftarrow \reduct{B_1}{X},\dots,\reduct{B_m}{X}
\mid
{} \\ & \qquad
r\in\Pi,X\models\body{r},H\in\reduct{\head{r}}{X},\poslits{\body{r}}=\{B_1,\dots,B_m\}
\big\}.
\end{align*}
That is, for all rules $r\in\Pi$ whose bodies are satisfied wrt~$X$,
the reduct is obtained by replacing $\summ$ constraints in heads
with individual atoms belonging to~$X$ and by
eliminating negative components in bodies,
where lower bounds of residual $\summ$ constraints (with trivial upper bounds)
are reduced accordingly.
Finally, 
$X$ is an \emph{answer set} of~$\Pi$ if $X$ is a model of~$\Pi$
such that no proper subset of~$X$ is a model of~$\reduct{\Pi}{X}$.
In view of the latter condition,
note that an answer set is a \emph{minimal} model of its own reduct.

The definition of answer sets provided above applies to
logic programs containing extended constructs ($\summ$ constraints)
under ``choice semantics'' \cite{siniso02a},
while additionally allowing for disjunctions under minimal-model semantics
(wrt a reduct).
We use these features to
embed extended constructs of an object program into a disjunctive
meta-program, so that their combination yields optimal answer sets
of the object program.
To this end,
we reinterpret $\minimize$ statements of the following form:
\begin{align}\label{eq:minimize}
\minimize[\ell_1=w_1@J_1,\dots,\ell_k=w_k@J_k].  
\end{align}
Like with $\summ$ constraints, every $\ell_i$ is a literal and every $w_i$
an integer weight for $1\leq i\leq k$,
while $J_i$ additionally provides an integer \emph{priority level}.\footnote{%
Explicit priority levels are supported in recent versions of the grounder
\gringo\ \cite{potasscoManual}. 
This avoids a dependency of priorities on input order,
which is considered by \lparse\ \cite{lparseManual}
if several $\minimize$ statements are provided.
Priority levels are also supported by \dlv\ \cite{dlv03a}
in weak constraints.
Furthermore, we admit negative weights in $\minimize$ statements,
where they cannot raise semantic problems (cf.\ \cite{ferraris05a}) going along
with the rewriting of $\summ$ constraints suggested in \cite{siniso02a}.}
Priorities allow for representing a sequence
of lexicographically ordered $\minimize$ objectives,
where greater levels are more significant than smaller ones.
By default,
a $\minimize$ statement distinguishes optimal answer sets of a program~$\Pi$
in the following way.
For any~$X\subseteq\mathcal{A}$ and integer $J$,
let $\Sigma^X_J$ denote the sum of weights $w$ over all occurrences of
weighted literals $\ell=w@J$ in~(\ref{eq:minimize}) such that $X\models\ell$. 
An answer set~$X$ of~$\Pi$ is dominated if there is an answer set~$Y$ of~$\Pi$
such that $\Sigma^Y_J<\Sigma^X_J$ and $\Sigma^Y_{J'}=\Sigma^X_{J'}$ for all $J'>J$,
and optimal otherwise.

In the following,
we assume that every logic program is accompanied with
one (possibly empty) $\minimize$ statement of the form~(\ref{eq:minimize}).
Instead of the default semantics,
we consider Pareto efficiency 
wrt priority levels~$J$, weights~$w$, and several distinct optimization criteria.
In view of this, we use levels for inducing a lexicographic order, while weights
are used for grouping literals (rather than summation).
Pareto improvement then builds upon a two-dimensional structure of
orderings among answer sets, induced by $J$ and $w$.
In turn, each such pairing is associated with some of the following orderings.
By $Y\leq^w_J X$, we denote that the cardinality of the multiset of
occurrences of $\ell=w@J$ in~(\ref{eq:minimize}) 
such that $Y\models\ell$
is not greater than the one of the
corresponding multiset for $X\models\ell$.
Furthermore, 
we write $Y\subseteq^w_J X$ if, for any weighted literal $\ell=w@J$
occurring in~(\ref{eq:minimize}),  
$Y\models\ell$ implies $X\models\ell$.
\EXT{Finally, we denote by $Y\preceq^w_J X$
that $Y$ is preferable to~$X$ \cite{sakino00a}}%
{As detailed in the extended version of this paper \cite{gekasc11A}, we additionally consider 
the approach of \cite{sakino00a} and  denote by $Y\preceq^w_J X$
that $Y$ is preferable to~$X$}
according to a (given) preference relation $\preceq$
among literals~$\ell$ such that $\ell=w@J$
occurs in~(\ref{eq:minimize}).  
Given a logic program~$\Pi$ and a collection~$M$ of relations of the form
$\diamond^{w}_{J}$ for priority levels~$J$, weights~$w$, and $\diamond\in\{\leq,\subseteq,\preceq\}$,
an answer set~$Y$ of~$\Pi$ dominates an answer set~$X$ of~$\Pi$ wrt~$M$
if there are a priority level~$J$ and a weight~$w$ such that
$X \diamond^w_J Y$ does not hold for $\diamond^w_J\in M$, while
$Y \diamond^{w'}_{J'} X$ holds for all $\diamond^{w'}_{J'}\in M$ where $J' \geq J$.
In turn, an answer set~$X$ of~$\Pi$ is \emph{optimal} wrt~$M$
if there is no answer set~$Y$ of~$\Pi$ that dominates~$X$ wrt~$M$.

As an example, consider the following program, referred to by $\Pi_0$:
\begin{align}
\label{ex:one}     1\ \{ p, t \}\phantom{\ 1} &\ \leftarrow\ 1\ \{ r, s, \naf{t} \}\ 2.\\
\label{ex:two}        \{ q, r \}         \ 1  &\ \leftarrow\ 1\ \{ p, t \}.\\
\label{ex:tri}  s\phantom{, r \}         \ 1} &\ \leftarrow\ \naf{q}, \naf{r}.
\end{align}
This program has five answer sets, viz.\ 
$\{p,q\}$, $\{p,r\}$, $\{p,s\}$, $\{p,s,t\}$, and $\{s,t\}$.
(Sets $\{a_1,\dots,a_k\}$ in~(\ref{ex:one}) and~(\ref{ex:two}) are used as shorthands
 for $\summ[a_1=1,\dots,a_k=1]$.)
%
In addition, let $\Pi_1$ denote the union of~$\Pi_0$ with
the following $\minimize$ statement:
\begin{align}
\label{ex:for}&  \mathtt{\#minimize}[ p = 1 @ 1, q = 1 @ 1, r = 1 @ 1, s = 1 @ 1 ].
\end{align}
This statement specifies that all atoms of $\Pi_0$ except for~$t$ 
are subject to minimization.
Passing~$\Pi_1$ 
to \gringo\ and an answer set solver like \smodels\ yields the single
$\leq^1_1$-minimal answer set $\{s,t\}$.
Note, however, that $\Pi_0$ has three $\subseteq^1_1$-minimal answer sets,
namely $\{p,q\}$, $\{p,r\}$, and $\{s,t\}$.
They cannot be computed directly from~$\Pi_1$
via any 
available ASP system.

We implement the complex optimization criteria
described above 
by meta-interpretation in disjunctive ASP.
For transparency, we provide meta-programs as true ASP code in the 
first-order input language of \gringo\ \cite{potasscoManual},
including
\texttt{not} and \texttt{|} as tokens for 
$\naf{}$ and $\vee$, respectively,
as well as \texttt{\{$a_1$,$\dots$,$a_k$\}} as shorthand for
\texttt{$\summ$[$a_1$=1,$\dots$,$a_k$=1]}.
Further constructs are informally introduced by need in the remainder of this paper.
Note that our (disjunctive) meta-programs apply to an extended object program
that does not include proper disjunctions (over more than one atom).
Unless stated otherwise,
we below 
use the term \emph{extended} program 
to refer to a logic program without proper disjunctions.


\section{Basic Meta-Modeling}\label{sec:meta}

For reinterpreting $\minimize$ statements by means of ASP,
we take advantage of recent advances in ASP grounding,
admitting an easy use of meta-modeling techniques.
To be precise, we rely upon the unrestricted usage of function symbols and
program reification as provided by \gringo\ \cite{potasscoManual}.
The latter allows for turning an input program 
along with a $\minimize$ statement into facts 
representing the structure of their ground instantiation\EXT{ via
fixed sets of predicates and function symbols}{}.

For illustrating the format output by \gringo, 
consider the facts in Line~1--15 of Listing~\ref{fig:reified},
obtained by calling \gringo\ with option \texttt{{-}-reify} on program $\Pi_0$.
\lstinputlisting[float=t,caption={Facts describing a reified extended logic program.},captionpos=b,label=fig:reified]{toyR.lp}
Let us detail the representation of the rule in~(\ref{ex:one}) inducing the facts in
Line~1--4.
The predicate \texttt{rule/2} is used to link the rule head and body.
By convention, both are positive rule elements, as indicated 
via the functor \texttt{pos/1}.
Furthermore,
the term \texttt{sum(1,0,2)} tells us that the head
is a $\summ$ constraint with lower bound~\texttt{1} and (trivial) upper bound~\texttt{2} over
a list labeled~\texttt{0} of weighted literals.
In fact, the included literals are provided via the facts over \texttt{wlist/4}
given in Line~2, whose first arguments are~\texttt{0}.
While the second arguments, \texttt{0} and~\texttt{1}, are simply indexes
(enabling the representation of duplicates in multisets),
the third ones provide literals, \texttt{p} and~\texttt{t},
each having the (default) weight~\texttt{1}, as given in the fourth arguments.
%
%
Again by convention, the body of each rule is a conjunction,
where the term \texttt{conjunction(0)} in Line~1 refers to the
set labeled~\texttt{0}. 
Its single element,
a positive $\summ$ constraint with lower bound~\texttt{1} and
upper bound~\texttt{2} over a list labeled~\texttt{1},
is provided by the fact in Line~3.
The corresponding weighted literals are described by the facts in Line~4;
observe that the negative literal~\texttt{not\;t} is represented
in terms of the functor \texttt{neg/1}, applied to \texttt{atom(t)}.
The rules in~(\ref{ex:two}) and~(\ref{ex:tri}) are represented analogously in Line~6--8
and 10--11, respectively.
It is still interesting to note that recurrences of lists of weighted literals (and sets)
can reuse labels introduced before, as done in Line~8 by referring to~\texttt{0}.
In fact, \gringo\ identifies repetitions of structural entities and reuses labels.
In addition to the rules of~$\Pi_0$, 
the elements of non-trivial strongly connected components of its positive dependency
graph (cf.~(\ref{eq:dependency}) below) are provided in Line~13--15.
Albeit their usage is explained in the next section,
note already that the members of the only such component, labeled~\texttt{0},
include atoms as well as (positive) body elements, i.e., conjunctions and
$\summ$ constraints, 
connecting the component.
Indeed, the existence of facts over \texttt{scc/2}
tells us that~$\Pi_0$ is not tight (cf.\ \cite{fages94a}).

Now, we may compute all five answer sets of~$\Pi_0$ (given in \texttt{p0.lp}) by
combining the facts in~Line~1--15 of Listing~\ref{fig:reified} 
with the basic meta-program in Listing~\ref{fig:meta} (\texttt{meta.lp}):
\footnote{Following Unix customs, 
  the minus symbol ``\texttt{-}'' stands for the output of ``\texttt{gringo {-}-reify p0.lp}.''}%
\lstinputlisting[float=t,caption={Basic meta-program (\texttt{\small meta.lp}) for reified extended logic programs.},captionpos=b,label=fig:meta]{meta.lp}
\begin{lstlisting}[numbers=none,basicstyle=\ttfamily,aboveskip=\smallskipamount,belowskip=\smallskipamount]
gringo --reify p0.lp | gringo meta.lp - | clasp 0
\end{lstlisting}
Each answer set of the meta-program applied to a reified program
corresponds to an answer set of the reified program.
More precisely, a set~$X$ of atoms is an answer set of the reified program
iff the meta-program yields an answer set~$Y$ such
that $X = \{a \mid \text{\texttt{hold(atom($a$))}}\in Y\}$, e.g.,
\texttt{hold(atom($q$))} stands for~$q$.
As indicated in the comments (preceded by \texttt{\%}),
our meta-program consists of three parts.
Among the rule elements extracted in Line~3--13,
only those occurring within bodies, identified via \texttt{eleb/1},
are relevant to the generation of answer sets specified in Line~17--28.
(Additional head elements, given by \texttt{elem/1}, are of interest in the next section.)
In fact,
answer set generation follows the structure of reified programs,
identifying conjunctions and
$\summ$ constraints that hold\footnote{%
 The ``\texttt{:}'' connective
 expands to the list of all instances of its left-hand side
 such that corresponding instances of literals on the right-hand side hold
 (cf.\ \cite{lparseManual} and \cite{potasscoManual}).}
to further derive atoms occurring in rule heads, 
either singular or within $\summ$ constraints (cf.\ Line~24--27).
Line~28 deals with integrity constraints
represented via the constant \texttt{false} in heads of reified rules.
The last part in Line~32 restricts the output of the meta-program's
answer sets to the representations of original input atoms.

Finally, note that \texttt{meta.lp} does not inspect facts
representing a reified $\minimize$ statement, such as the ones
in Line 17--19 of Listing~\ref{fig:reified} stemming from the statement in~(\ref{ex:for}).
Such facts over \texttt{minimize/2} provide a priority level
as the first argument and the label of a list of weighted literals,
like the ones referred to from within terms of functor \texttt{sum/3},
as the second argument.
Rather than simply mirroring the   standard meaning of $\minimize$ statements
(by encoding them analogously to rules; cf.\ Line~17--28 of Listing~\ref{fig:meta}),
we support flexible customizations.
In fact, the next section presents our meta-programs implementing
preference relations and Pareto efficiency, as described in the background.



\section{Advanced Meta-Modeling}
\label{sec:approach}

Given the reification of extended logic programs and the encoding
of their answer sets in \texttt{meta.lp},
our approach to complex optimization is based on the idea
that an answer set generated via \texttt{meta.lp} is optimal (and
thus acceptable) only if it is not dominated by any other answer set.
For implementing our approach, we exploit the capabilities of 
disjunctive ASP to compactly represent the space of all potential
counterexamples, viz.\ answer sets dominating a candidate answer set at hand.
To this end, we encode the subtasks of
\begin{enumerate}
\item guessing an answer set as a potential counterexample and
\item verifying that the counterexample dominates a candidate answer set.
\end{enumerate}
A candidate answer set passes both phases if it turns out to be
infeasible to guess a counterexample that dominates it.
For expressing the non-existence of counterexamples,
we make use of an error-indicating atom \texttt{bot} and saturation \cite{eitgot95a},
deriving all atoms representing the space of counterexamples from \texttt{bot}.
Since the semantics of disjunctive ASP is based on minimization,
saturation makes sure that \texttt{bot} is derived only if it is inevitable, i.e.,
if it is impossible to construct a counterexample.
However,
via an integrity constraint, we can stipulate \texttt{bot} 
(and thus the non-existence of counterexamples) to hold,
yet without providing any derivation of \texttt{bot}.
In view of such a constraint and saturation, a successful candidate answer set
is accompanied by all atoms representing counterexamples.
Given that the reduct drops negative literals,
the necessity that all atoms representing counterexamples are true implies that
we cannot use their default negation in any meaningful way.
Hence, we below encode potential counterexamples, i.e., answer sets of extended programs,
and (non-)dominance of a candidate answer set in disjunctive ASP
without taking advantage of default negation (used in \texttt{meta.lp}).

For encoding the first subtask of guessing a counterexample, we
rely on a characterization of answer sets in terms of
an \emph{immediate consequence operator}~$\tp$ (cf.\ \cite{lloyd87}),
defined as follows for a logic program~$\Pi$ and a set $X\subseteq\mathcal{A}$
of atoms:
\( 
\tps{\Pi}{X}
= 
\{ 
  \head{r}
  \mid 
  r\in\Pi,
  X\models \body{r}
\}
\). 
Furthermore,
an iterative version of~$\tp$ can be defined in the following way:
\( 
\tpi{\Pi}{X}{0}
= 
X
\)
and
\(
\tpi{\Pi}{X}{i+1}
= 
\tpi{\Pi}{X}{i}\cup\tps{\Pi}{\tpi{\Pi}{X}{i}}
\). 
In the context of an extended program~$\Pi$,
possibly including choice rules, default negation, and upper bounds of weight constraints,
we are interested in the least fixpoint of~$\tp$ applied wrt
the reduct~$\reduct{\Pi}{X}$.
Since a fixpoint is reached in at most $|\atom{\Pi}|$ applications of~$\tp$,
where $\atom{\Pi}\subseteq\mathcal{A}$ denotes the set of atoms occurring in~$\Pi$,
the least fixpoint is
given by $\tpi{\reduct{\Pi}{X}}{\emptyset}{|\atom{\Pi}|}$.
As pointed out in \cite{liuyou10a},
a model~$X$ of an extended program~$\Pi$ is an answer set of~$\Pi$
iff $\tpi{\reduct{\Pi}{X}}{\emptyset}{|\atom{\Pi}|}=X$.
Furthermore,
Liu and You \citeyear{liuyou10a}
show that~$X$ violates the loop formula of some atom or loop
if $X$ is a model, but not an answer set of~$\Pi$.
This property motivates a ``localization'' of~$\tp$
on the basis of (circular) positive dependencies.

The \emph{(positive) dependency graph} of an extended program~$\Pi$ is
given by the following pair of nodes and directed edges:
%
\begin{align}\label{eq:dependency}
\left(
\atom{\Pi},
\{
  (a,b)\mid 
  r\in\Pi,a\in\atom{\poslits{\head{r}}},B\in\pbody{r},b\in\atom{\poslits{B}}
\}
\right)\!.\!\!
\end{align}
A strongly connected component (SCC) is a maximal subgraph of the 
dependency graph of~$\Pi$ such that all nodes are pairwisely connected via paths.
An SCC is trivial if it does not contain any edge, and non-trivial otherwise.
Note that the SCCs of the dependency graph of~$\Pi$ induce a partition
of $\atom{\Pi}$ such that every atom and every loop of~$\Pi$ is contained in
some part.
Hence, we can make use of the partition to apply~$\tp$ separately to each part.
\begin{proposition}\label{prop:tp}
Let $\Pi$ be an extended logic program,
$C_1,\dots,C_k$ be the sets of atoms belonging
to the SCCs of the dependency graph of~$\Pi$,
and $X\subseteq\atom{\Pi}$.
\SPACY{}
Then, we have that
$\tpi{\reduct{\Pi}{X}}{\emptyset}{|\atom{\Pi}|}=X$
iff
$\bigcup_{1\leq j\leq k}
(\tpi{\reduct{\Pi}{X}}{X\setminus C_j}{|C_j|}\cap C_j)=X$.
\end{proposition}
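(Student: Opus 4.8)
The plan is to reduce the global condition to a family of component-local conditions and then prove these by induction along a topological ordering of the strongly connected components (SCCs). I would abbreviate $\Pi' = \reduct{\Pi}{X}$ and write $L = \tpi{\Pi'}{\emptyset}{|\atom{\Pi}|}$ for the least fixpoint, so that the left-hand condition reads $L = X$. Since the $C_j$ partition $\atom{\Pi}$ and each $\tpi{\Pi'}{X \setminus C_j}{|C_j|} \cap C_j \subseteq C_j$, the right-hand union equals $X$ iff, for every $j$, $\tpi{\Pi'}{X \setminus C_j}{|C_j|} \cap C_j = X \cap C_j$. Thus both sides decompose SCC-wise, and it suffices to relate each local computation to $L \cap C_j$.

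Two structural facts drive everything. First, $\tp$ is monotone on $\Pi'$: after forming the reduct, every residual body $\summ$ constraint retains only its positive literals together with a lower bound, so body satisfaction is preserved under supersets; hence $\tpi{\Pi'}{Y}{i}$ is non-decreasing in $i$ for any $Y$. Second, I would fix a topological order $C_1, \dots, C_k$ in which any dependency edge between distinct components points to a smaller index, and observe that each positive-dependency edge of $\Pi'$ is also an edge of the dependency graph of $\Pi$ (the reduct only drops negative literals and simplifies heads to atoms of $\atom{\poslits{\head{r}}}$). Consequently the SCCs of $\Pi$ stratify $\Pi'$: any rule of $\Pi'$ whose head lies in $C_m$ has all its positive body atoms in $C_1 \cup \dots \cup C_m$, so the derivation of $C_1 \cup \dots \cup C_j$-atoms never consults atoms from higher components.

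The core lemma, proved by induction on $j$, is: if $L \cap (C_1 \cup \dots \cup C_{j-1}) = X \cap (C_1 \cup \dots \cup C_{j-1})$, then $\tpi{\Pi'}{X \setminus C_j}{|C_j|} \cap C_j = L \cap C_j$. Under this hypothesis the lower part of the local computation starts at $X \cap (C_1 \cup \dots \cup C_{j-1}) = L \cap (C_1 \cup \dots \cup C_{j-1})$, which is already closed (being a fragment of the fixpoint $L$) and hence frozen; by stratification the higher atoms present in $X \setminus C_j$ are inert for rules with head in $C_1 \cup \dots \cup C_j$. So the only growth among these atoms occurs inside $C_j$. A short induction on the iteration count $i$ gives $\tpi{\Pi'}{X \setminus C_j}{i} \cap (C_1 \cup \dots \cup C_j) \subseteq L$ (each newly derived such atom fires a rule whose relevant body atoms already lie in $L$, and $L$ is closed), whereas replaying the global derivation order of the atoms in $L \cap C_j$ shows that each of them is produced locally; together these yield $\tpi{\Pi'}{X \setminus C_j}{\infty} \cap C_j = L \cap C_j$. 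Finally, since the frozen lower part forces at least one new $C_j$-atom per productive step and a stall on $C_1 \cup \dots \cup C_j$ is permanent (the bodies of $C_j$-rules then never change again), the fixpoint on $C_j$ is reached within $|C_j|$ steps, upgrading $\infty$ to $|C_j|$.

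Both directions then follow. For ``$\Rightarrow$'', assume $L = X$; the hypothesis of the core lemma holds for every $j$, so each local computation returns $L \cap C_j = X \cap C_j$ and the right-hand union equals $X$. For ``$\Leftarrow$'', I run the induction: assuming the right-hand equalities together with $L \cap (C_1 \cup \dots \cup C_{j-1}) = X \cap (C_1 \cup \dots \cup C_{j-1})$, the core lemma gives $L \cap C_j = \tpi{\Pi'}{X \setminus C_j}{|C_j|} \cap C_j = X \cap C_j$, closing the induction and hence $L = X$. I expect the main obstacle to be the $|C_j|$-step bound: one must argue carefully that, once the lower components are frozen at their $X$-values, derivations outside $C_j$ cannot re-enable $C_j$-atoms, so that the growth of the $C_j$-restriction is strictly monotone until saturation. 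This is exactly where both the monotonicity of $\tp$ and the topological stratification are indispensable.
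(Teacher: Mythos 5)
Your proof is correct and follows essentially the same route as the paper's: fix a topological ordering of the SCCs, exploit that positive dependencies of the reduct never point to higher components, and relate each local computation $\tpi{\reduct{\Pi}{X}}{X\setminus C_j}{|C_j|}\cap C_j$ to the global least fixpoint by induction along the components. Your version merely makes explicit several points the paper's sketch glosses over (monotonicity of $\tp$ on the reduct, the freezing of the already-computed lower components, and the justification of the $|C_j|$-step bound), which is a welcome refinement rather than a divergence.
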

\EXT{%
\begin{proof}[Proof (Sketch)]
It is clear that
$\tpi{\reduct{\Pi}{X}}{\emptyset}{|\atom{\Pi}|}=X$
implies 
$\tpi{\reduct{\Pi}{X}}{X\setminus C_j}{|C_j|}\cap C_j=X\cap C_j$
for every $1\leq j\leq k$.
Hence,
we only need to show that
$\bigcup_{1\leq j\leq k}
(\tpi{\reduct{\Pi}{X}}{X\setminus C_j}{|C_j|}\cap C_j)=X$
implies 
$\tpi{\reduct{\Pi}{X}}{\emptyset}{|\atom{\Pi}|}=X$.
To this end,
assume that 
$\bigcup_{1\leq j\leq k}
(\tpi{\reduct{\Pi}{X}}{X\setminus C_j}{|C_j|}\cap C_j)=X$
and that $C_1,\dots,C_k$ are topologically ordered
such that the dependency graph of~$\Pi$ does not contain any edge
from an atom in $C_j$ to atoms in $C_{j+1}\cup\dots\cup C_k$ for $1\leq j\leq k$.
That is, atoms in $C_{j+1}\cup\dots\cup C_k$ do not occur
in rules $r\in\reduct{\Pi}{X}$ such that $\head{r}\in C_j$,
so that 
$\tpi{\reduct{\Pi}{X}}{X\setminus C_j}{|C_j|}\cap C_j=
 \tpi{\reduct{\Pi}{X}}{X\cap (C_1\cup\dots\cup C_{j-1})}{|C_j|}\cap C_j$
for $1\leq j\leq k$.
From
$X
 =\linebreak[1]
 (X\cap C_1)\cup\dots\cup(X\cap C_k)
 =
 (\tpi{\reduct{\Pi}{X}}{\emptyset}{|C_1|}\cap C_1)\cup\dots\cup
 (\tpi{\reduct{\Pi}{X}}{X\cap (C_1\cup\dots\cup C_{k-1})}{|C_k|}\cap C_k)
 \subseteq
 \tpi{\reduct{\Pi}{X}}{\emptyset}{|C_1|+\dots+|C_k|}
 =
 \tpi{\reduct{\Pi}{X}}{\emptyset}{|\atom{\Pi}|}
 \subseteq 
 X$,
we then conclude that $\tpi{\reduct{\Pi}{X}}{\emptyset}{|\atom{\Pi}|}=X$.
\end{proof}

For illustration, reconsider $\Pi_0$ in (\ref{ex:one})--(\ref{ex:tri})
and its dependency graph, looking as follows:
\begin{center}
\setlength{\unitlength}{1.2pt}
 \begin{picture}(90,34)(0,-14)
 \put(1,0){$p$}
 \put(90,0){$t$}
 \put(45.5,0){$r$}
 \put(45.5,-17){$s$}
 \put(45.5,17){$q$}
 \qbezier(8,3)(25,9)(42,3) \put(43,2.6){\vector(3,-1){1}}
 \qbezier(43,-1)(26,-7)(9,-1) \put(8,-0.7){\vector(-3,1){1}}
 \qbezier(86.5,3)(69.5,9)(52.5,3) \put(87.5,2.6){\vector(3,-1){1}}
 \qbezier(54,-1)(71,-7)(88,-1) \put(53,-0.7){\vector(-3,1){1}} 
 \put(7,-2.5){\vector(3,-1){37}} 
 \put(88.5,-2.5){\vector(-3,-1){37}} 
 \put(43.5,17.5){\vector(-3,-1){37}} 
 \put(53,17.5){\vector(3,-1){37}} 
 \end{picture}
\end{center}
Observe that the sets $\{s\}$, $\{p,r,t\}$, and $\{q\}$ of atoms
belong to SCCs (ordered topologically).
Furthermore, let us take the following reducts into account:
\begin{align*}
\reduct{\Pi_0}{\{p,r\}} =
\left\{
\begin{array}{@{}r@{}c@{}l@{}}
p & {}\leftarrow{} & 0\ \summ [ r=1, s=1 ].
\\
r & {}\leftarrow{} & 1\ \summ [ p=1, t=1 ].
\end{array}
\right\}
&&
\reduct{\Pi_0}{\{r,t\}} =
\left\{
\begin{array}{@{}r@{}c@{}l@{}}
t & {}\leftarrow{} & 1\ \summ [ r=1, s=1 ].
\\
r & {}\leftarrow{} & 1\ \summ [ p=1, t=1 ].
\end{array}
\right\}
\end{align*}
We have that
$\tpi{\reduct{\Pi_0}{\{p,r\}}}{\emptyset}{1}=\{p\}$ and
$\tpi{\reduct{\Pi_0}{\{p,r\}}}{\emptyset}{2}=
 \{p\}\cup\tps{\reduct{\Pi_0}{\{p,r\}}}{\{p\}}=
 \{p,r\}=
 \tpi{\reduct{\Pi_0}{\{p,r\}}}{\emptyset}{3}$.
Along with
$\tpi{\reduct{\Pi_0}{\{p,r\}}}{\{p,r\}}{1}\cap\{s\}=
 \tpi{\reduct{\Pi_0}{\{p,r\}}}{\{p,r\}}{1}\cap\{q\}=\linebreak[1]
 \emptyset$,
we obtain 
$(\tpi{\reduct{\Pi_0}{\{p,r\}}}{\{p,r\}}{1}\cap\{s\})\cup
 (\tpi{\reduct{\Pi_0}{\{p,r\}}}{\emptyset}{3}\cap\{p,r,t\})\cup
 (\tpi{\reduct{\Pi_0}{\{p,r\}}}{\{p,r\}}{1}\cap\{q\})=\linebreak[1]\{p,r\}$.
In view of Proposition~\ref{prop:tp},
this confirms that the model~$\{p,r\}$ of~$\Pi_0$
is an answer set of~$\Pi_0$.
On the other hand,
$(\tpi{\reduct{\Pi_0}{\{r,t\}}}{\{r,t\}}{1}\cap\{s\})\cup
 (\tpi{\reduct{\Pi_0}{\{r,t\}}}{\emptyset}{3}\cap\{p,r,t\})\cup
 (\tpi{\reduct{\Pi_0}{\{r,t\}}}{\{r,t\}}{1}\cap\{q\})=\emptyset$ 
yields that the model~$\{r,t\}$ of~$\Pi_0$
is not an answer set of~$\Pi_0$.}{}

\EXT{%
In a nutshell,
our encoding of answer sets (as counterexamples)
in disjunctive ASP combines the following parts:}%
{The above property is used in
 our encoding of answer sets (as counterexamples) in disjunctive ASP.
 In a nutshell, it combines the following parts:}
\begin{enumerate}
\item guessing an interpretation,
\item deriving the error-indicating atom \texttt{bot} if the interpretation is not a supported model
      (where each true atom occurs positively in the head of some rule whose body holds),
\item deriving \texttt{bot} if the true atoms of some \emph{non-trivial} SCC
      are not acyclicly derivable (checked via determining the complement of a fixpoint of~$\tp$), and
\item saturating interpretations that do not correspond to answer sets
      by deriving all truth assignments (for atoms) from \texttt{bot}.
\end{enumerate}
Note that the third part, checking acyclic derivability, concentrates
on atoms of non-trivial SCCs,
while checking support in the second part is already sufficient
for trivial SCCs.

\lstinputlisting[float=t,caption={Disjunctive meta-program (\texttt{\small metaD.lp}) for reified extended logic programs.},captionpos=b,label=fig:meta:disjunctive:one]{metaD.lp}
The meta-program in Listing~\ref{fig:meta:disjunctive:one} implements the sketched idea.
In the following, we concentrate on describing its crucial features.
For evaluating support, the meta-rules in Line~3 and~4
collect atoms having a positive occurrence in the head of a rule
along with the rule's body.
Note that, for atoms contained in a $\summ$ constraint in the head,
the associated bounds and weights are inessential in the context of support.
On the other hand, the meta-rule in Line~6 sums the weights of all literals in a $\summ$ constraint;
this is needed to evaluate bounds in the sequel,
where (non-reified) default negation and upper bounds (acting negatively)
are inapplicable in view of saturation.

The meta-rules in Line~10--29 generate an interpretation by guessing 
some truth value for each atom (Line~10) and
evaluating further constructs occurring in a reified program accordingly (Line~12--29).
While the special constant \texttt{false} (used as head of integrity constraints)
holds in no interpretation (\texttt{fail(false)} is a fact)
and the evaluation of conjunctions is straightforward,
more care is required for evaluating $\summ$ constraints.
For instance, the case that a $\summ$ constraint holds is
in the meta-rule in Line~19--23 identified via sufficiently
many literals that hold to achieve the lower bound~\texttt{L} and also
sufficiently many literals that do not hold to fill the gap
between the upper bound~\texttt{U} and the sum~\texttt{T} of all weights.
Note that the latter condition is encoded by the lower bound \texttt{T-U},
rather than taking~\texttt{U} as an upper bound (as done in \texttt{meta.lp}).
The complementary cases that a $\summ$ constraint does not hold
are described in the same manner in Line~24--29,
where the lower bound \texttt{T-L+1} (or \texttt{U+1}) for weights of
literals that do not hold (or hold) is used to indicate a violated lower (or upper)
bound of the reified $\summ$ constraint.

Given an interpretation of atoms and the corresponding truth values 
of further constructs in an extended program,
the meta-rules in Line~33 and~34 are used to derive~\texttt{bot}
if the interpretation does not provide us with a supported model.
To avoid such a derivation of~\texttt{bot}, every rule of the
reified program must be satisfied, and every true atom must have
a positive occurrence in the head of some rule whose body holds.

It remains to check the acyclic derivability of atoms belonging to
non-trivial SCCs.
To this end,
the meta-rule in Line~38 determines the number~\texttt{Z} of atoms in an SCC
labeled~\texttt{C} as the maximum step at which a fixpoint of~$\tp$,
applied locally to~\texttt{C}, is reached.
Furthermore, the meta-rule in Line~40--41 derives~\texttt{sccw(A)} if
the atom referred to by~\texttt{A} does not have a derivation external to~\texttt{C}.
(Recall that the positive body elements of rules internally connecting an SCC,
 i.e., rules contributing the SCC's edges to the dependency graph,
 are marked by facts over \texttt{scc/2}; cf.\ Listing~\ref{fig:reified}.)
The acyclic derivability of atoms indicated by \texttt{sccw(A)} is of
particular interest in the sequel.
In fact,
our encoding identifies the complement of a fixpoint of~$\tp$ 
in terms of atoms~\texttt{A} for which \texttt{wait(atom(A),Z)} is derived.
To accomplish this,
the meta-rule in Line~45 marks all atoms of~\texttt{C} as underived at step~\texttt{0}.
As encoded via the meta-rule in Line~46--47,
an atom~\texttt{A} stays underived at a later step~\texttt{D}
if there is no external derivation of~\texttt{A} (\texttt{sccw(A)} holds) and
the bodies~\texttt{B} of all component-internal supports of~\texttt{A}
are yet underived at step \texttt{D-1} (\texttt{wait(B,D-1)} holds).
The latter is checked via the meta-rules in Line~49--52 and~54--55, respectively.
The former applies to $\summ$ constraints
and identifies cases where the weights of literals that do not hold 
along with the ones of yet underived atoms of~\texttt{C} exceed~\texttt{T-L},
so that the lower bound~\texttt{L} is not yet established.
Similarly, the underivability of a conjunction is recognized 
via a yet underived positive body element internal to the component~\texttt{C}.
Also note that the falsity of elements of~\texttt{C} 
is propagated via the meta-rule in Line~43,
so that false atoms, $\summ$ constraints, and conjunctions
do not contribute to derivations of atoms of~\texttt{C}.
As mentioned above, 
the complement of a fixpoint of~$\tp$ contains the
atoms~\texttt{A} such that \texttt{wait(atom(A),Z)} is eventually derived.
If any such atom~\texttt{A} is true,
failure to construct an answer set is indicated by
deriving~\texttt{bot} via the meta-rule in Line~57.

\EXT{%
For illustration,
consider the ground rules 
shown in Listing~\ref{fig:ground:scc},
which are 
obtained for the SCC labeled~\texttt{0}
in Listing~\ref{fig:reified}.
\lstinputlisting[float=t,caption={(Simplified) ground rules obtained from meta-rules in Line~38--57 of \texttt{\small metaD.lp}.},captionpos=b,label=fig:ground:scc]{toySCC.lp}
%
For the answer set $\{p,r\}$ of~$\Pi_0$,
represented by the atoms
\texttt{true(atom(p))}, \texttt{true(atom(r))},
\texttt{fail(atom(q))}, \texttt{fail(atom(s))}, and \texttt{fail(atom(t))},
we have that 
\texttt{wait(sum(1,1,2),\linebreak[1]0;1;2)} and 
\texttt{wait(conjunction(0),\linebreak[1]0;1;2)}
are underivable via the rules in Line~26--28 and~34--36, respectively.
Thus,
\texttt{wait(atom(p),\linebreak[1]1;2;3)}
are not derived via the rules in Line~12--14,
so that 
\texttt{wait(sum(1,0,2),\linebreak[1]1;2)} and 
\texttt{wait(conjunction(1),\linebreak[1]1;2)}
are underivable in turn via the rules in Line~31--32 and 39--40, respectively.
As a consequence,
\texttt{wait(atom(r),\linebreak[1]2;3)}
are not derived via the rules in Line~18--19.
We have thus checked that none of the rules in Line~42--44 allows for deriving \texttt{bot},
which tells us that the true atoms~$p$ and~$r$ are acyclicly derivable.
On the other hand, for the interpretation $\{r,t\}$,
given by 
\texttt{true(atom(r))}, \texttt{true(atom(t))},
\texttt{fail(atom(p))}, \texttt{fail(atom(q))}, and \texttt{fail(atom(s))},
we have that the atoms
\texttt{wait(atom(p),\linebreak[1]0;1;2)},
\texttt{wait(sum(1,1,2),\linebreak[1]0;1;2)}, 
\texttt{wait(sum(1,0,2),\linebreak[1]0;1;2)},
\texttt{wait(conjunction(0),\linebreak[1]0;1;2)},
\texttt{wait(conjunction(1),\linebreak[1]0;1;2)},
\texttt{wait(atom(r),\linebreak[1]0;1;2;3)}, and 
\texttt{wait(atom(t),\linebreak[1]0;1;2;3)}
are derived in turn via the rules in Line~3 and 16--40.
From
\texttt{true(atom(r))} and \texttt{wait(atom(r),3)} as well as
\texttt{true(atom(t))} and \texttt{wait(atom(t),3)},
we further derive the error-indicating atom \texttt{bot}
via the rules in Line~43 and~44.
This signals that the true atoms $r$ and~$t$ are \emph{not} acyclicly derivable,
so that $\{r,t\}$ is not an answer set of~$\Pi_0$.}%
{}

Finally, saturation of interpretations that do not correspond to answer sets
is accomplished via the meta-rules in Line~61 and~62 of Listing~\ref{fig:meta:disjunctive:one}.
They make sure that~\texttt{bot} is included in an answer set of the meta-program
only if it is inevitable wrt every interpretation.
When considering the encoding part in Listing~\ref{fig:meta:disjunctive:one}
in isolation, it like \texttt{meta.lp} describes answer sets of a reified program,
and \texttt{bot} is derived only if there is no such answer set.

\EXT{%
\lstinputlisting[float=t,caption={Meta-program for complex optimization (\texttt{\small metaO.lp}) on reified logic programs.},captionpos=b,label=fig:meta:optimize:one,linerange={1-64}]{metaO.lp}
}%
{%
\lstinputlisting[float=t,caption={Meta-program for complex optimization (\texttt{\small metaO.lp}) on reified logic programs.},captionpos=b,label=fig:meta:optimize:one,linerange={1-27,47-64}]{metaO.lp}
}%
Our meta-programs \texttt{meta.lp} and \texttt{metaD.lp}
in Listing~\ref{fig:meta} and~\ref{fig:meta:disjunctive:one}
have not yet considered facts \texttt{minimize(J,S)} in reified programs,
reflecting input $\minimize$ statements.
In fact, complex optimization is addressed by the meta-program \texttt{metaO.lp},
\EXT{whose first part is }{}shown in Listing~\ref{fig:meta:optimize:one}.
It allows for separate optimization criteria per priority level~\texttt{J} and
weight~\texttt{W} (in facts \texttt{wlist(S,Q,E,W)}).
Particular criteria can be provided via the user predicate \texttt{optimize(J,W,O)},
where the values \texttt{card}, \texttt{incl}, and \texttt{pref} for~\texttt{O}
refer to minimality regarding cardinality, inclusion, and 
preference \cite{sakino00a}, respectively, among the involved
literals~\texttt{E}. 
Such criteria are reflected via instances of \texttt{cxopt(J,W,O)}, 
derived via the rules in Line~7 and~8--9,
where \texttt{card} is taken by default if no criterion is provided by the user.
At each priority level~\texttt{J}, Pareto improvement of a counterexample
(constructed via the rules in \texttt{metaD.lp})
over all weights~\texttt{W} and criteria~\texttt{O} 
such that \texttt{cxopt(J,W,O)} holds
is used for deciding whether a candidate answer set
(constructed via the rules in \texttt{meta.lp}) is optimal.
To this end,
similarity 
at a priority level~\texttt{J} is indicated by deriving \texttt{equal(J)}
from \texttt{equal(J,W,O)} over all instances of \texttt{cxopt(J,W,O)}
via the rule in Line~13.
Furthermore, the rules in Line~15--19 are used to chain successive priority levels,
where a greater level~\texttt{J1} is more significant than its smaller neighbor~\texttt{J2},
and to signal whether a priority level~\texttt{J2} is taken into account.
The latter is the case if \texttt{equal(J1)} has been derived at all more significant
priority levels~\texttt{J1}.
If it turns out that a candidate answer set is not refuted by a dominating counterexample,
we derive \texttt{bot} via the rules in Line~21, 22, and~23:
the first rule applies if there are no optimization criteria at all,
the second one checks whether the counterexample is worse (or incomparable),
as indicated by \texttt{worse(J1)}
at an inspected priority level~\texttt{J1},
and the third one detects lack of Pareto improvement
from equality at the lowest priority level.
Finally, the integrity constraint in Line~27 stipulates \texttt{bot} to hold.
Along with saturation (in \texttt{metaD.lp}),
this implies that a candidate answer set (constructed via the rules in \texttt{meta.lp})
is accepted only if there is no dominating counterexample,
thus selecting exactly the optimal answer sets of an input program.
The described rules serve the general purpose of identifying undominated answer sets,
and the remainder of \texttt{metaO.lp} defines
\texttt{equal(J,W,O)} and \texttt{worse(J)} relative to particular
optimization criteria.

\EXT{%
The rules in Line~31--46 implement the comparison of cardinalities 
between a candidate answer set and a counterexample at a priority level~\texttt{J}
over the literals indicated by a weight~\texttt{W} (in facts \texttt{wlist(S,Q,E,W)}).
To this end, we first count the number of such literals~\texttt{E}
that hold wrt the candidate answer set via the rules in Line~34--37.
The outcome, i.e., the greatest number~\texttt{I} such that 
\texttt{count(S,W,Q,I)} is derived at the maximum index~\texttt{Q}
of the list~\texttt{S} given in \texttt{minimize(J,S)},
is identified via the rule in Line~39 and taken as the starting value
for counting down the literals that hold wrt the counterexample.
Note that this approach omits a (quadratic) comparison between the precise
cardinalities obtained wrt the candidate answer set and the counterexample.
The result of counting down, accomplished via the rules in Line~40--42,
is given by the derived instances of \texttt{cdown(S,W,-1,I)}, where two
outcomes deserve particular attention.
In fact, \texttt{cdown(S,W,-1,0)} indicates that at least as many literals
as wrt the candidate answer set hold wrt the counterexample, so that
\texttt{equal(J,W,card)} is derived via the rule in Line~44.
On the other hand, \texttt{cdown(S,W,-1,-1)} signals that more literals hold
wrt the counterexample;
as this makes a Pareto improvement at priority level~\texttt{J} impossible,
we derive \texttt{worse(J)} via the rule in Line~46.
Finally, note that we did not directly encode minimization of cost functions with coefficients
different from~$1$.
In view of the multiset semantics supported by \gringo, such coefficients
can still be represented by including the desired number of duplicates of
a literal in an input $\minimize$ statement.

The second optimization criterion, inclusion (indicated via \texttt{cxopt(J,W,incl)}),
is implemented by the rules in Line~50--64.
}%
{Inclusion-based minimization, indicated via \texttt{cxopt(J,W,incl)},
 is implemented by the rules in Line~31--45.
}%
The test for equality, attested by deriving \texttt{equal(J,W,incl)} via the rule in Line~\EXT{59}{40},
is accomplished by checking whether a candidate answer set and a (comparable) counterexample agree 
on all involved literals~\texttt{E}; otherwise, \texttt{ndiff(E)} is not derived
via the rules in Line~\EXT{50--57}{31--38}. 
Furthermore,
the counterexample is incomparable to the candidate answer set
if it includes some literal not shared by the latter;
in such a case, \texttt{worse(J)} is derived via the rules in Line~\EXT{61--62}{42--43} and~\EXT{63--64}{44--45}.
In fact, the three $\subseteq^1_1$-minimal answer sets
of $\Pi_1$ (given in \texttt{p1.lp}), consisting of the rules in (\ref{ex:one})--(\ref{ex:tri})
and the $\minimize$ statement in~(\ref{ex:for})
can now be computed in the following way:
\begin{lstlisting}[numbers=none,basicstyle=\ttfamily,aboveskip=\smallskipamount,belowskip=\smallskipamount,xleftmargin=\parindent]
gringo --reify p1.lp | gringo meta.lp metaD.lp metaO.lp \
  <(echo "optimize(1,1,incl).") - | claspD 0
\end{lstlisting}
Observe that \claspD\ \cite{drgegrkakoossc08a}, the disjunctive extension of
\clasp\ \cite{gekanesc07a}, 
is used for solving the proper disjunctive ground program obtained from \gringo.

\EXT{%
\lstinputlisting[float=t,caption={Meta-program for literal preferences (\texttt{\small metaO.lp}) on reified logic programs.},captionpos=b,label=fig:meta:optimize:two,linerange={66-106},firstnumber=66]{metaO.lp}
The third optimization criterion implemented in \texttt{metaO.lp}
(Listing~\ref{fig:meta:optimize:two})
relies on literal preferences according to \cite{sakino00a},
i.e., a relation $\ell_1\preceq \ell_2$ provided via the user predicate \texttt{prefer/2}.
In a nutshell, an answer set $X_1$ is preferable to another answer set $X_2$, i.e., $X_1\preceq X_2$,
if there is a preference $\ell_1\preceq \ell_2$ such that
$\ell_1$ holds wrt~$X_1$, but not wrt~$X_2$, and
$\ell_2$ holds wrt~$X_2$, but not wrt~$X_1$,
while $\ell_2'\npreceq \ell_1$ or $\ell_1\preceq \ell_2'$
applies to every literal $\ell_2'$ that holds wrt~$X_2$, but not wrt~$X_1$.
The idea of the encoding part in Listing~\ref{fig:meta:optimize:two} is to derive
\texttt{equal(J,W,pref)} if 
a candidate answer set~$X_1$ is preferable to a counterexample~$X_2$,
and \texttt{worse(J)} if $X_2$ is not preferable to~$X_1$.
If both \texttt{equal(J,W,pref)} and \texttt{worse(J)} are underivable,
it shows that $X_2$ is preferable to~$X_1$, but not vice versa,
so that~$X_1$ is dominated by~$X_2$ wrt the projection of the
preference relation specified via \texttt{prefer/2} to literals~\texttt{E}
qualified by \texttt{minimize(J,S)} and \texttt{wlist(S,\_,E,W)}.
To implement these checks,
the rules in Line~68--92 identify literals at which a candidate answer
set~$X_1$ and a counterexample~$X_2$ differ.
The literals~\texttt{E} that hold wrt~$X_1$, but not wrt~$X_2$,
are indicated by \texttt{cando(E)}, while \texttt{nocan(E)} is derived otherwise.
Similarly, \texttt{condo(E)} expresses that \texttt{E} holds wrt~$X_2$,
but not wrt~$X_1$, whereas \texttt{nocon(E)} indicates that this is not the case.
The rule in Line~94--95 further derives \texttt{cando(S,W,E)}
if~\texttt{E} is a literal that holds wrt~$X_1$ only such
that \texttt{E}${}\preceq{}$\texttt{E1} for a literal~\texttt{E1}
that holds wrt~$X_2$ only.
If such a literal~\texttt{E} is not defeated by another literal~\texttt{E2} 
satisfying \texttt{E2}${}\preceq{}$\texttt{E} as well as \texttt{E}${}\npreceq{}$\texttt{E2}
and holding wrt~$X_2$ only,
we derive \texttt{equal(J,W,pref)} 
via the rule in Line~103--104.
On the other hand, a literal~\texttt{E} provides no evidence for
the counterexample~$X_2$ being preferable to the candidate answer set~$X_1$
if \texttt{E} does not hold wrt~$X_2$ only,
if no~\texttt{E1} such that \texttt{E}${}\preceq{}$\texttt{E1}
holds wrt~$X_1$ only, or if
some~\texttt{E2} such that \texttt{E2}${}\preceq{}$\texttt{E} and
\texttt{E}${}\npreceq{}$\texttt{E2} holds wrt~$X_1$ only.
If some of these cases applies,
\texttt{nocon(S,W,E)} is derived via the rules in Line~97--101.
Given this,
we further derive \texttt{worse(J)} via the rule in Line~106
if none of the literals~\texttt{E} qualified by \texttt{wlist(S,\_,E,W)}
witnesses that $X_2$ is preferable to~$X_1$.
Finally, recall that the counterexample~$X_2$ dominates the candidate
answer set~$X_1$ if~$X_2$ is preferable to~$X_1$, but not vice versa,
i.e., if neither \texttt{equal(J,W,pref)} nor \texttt{worse(J)} is derivable
in view of \texttt{cxopt(J,W,pref)}.}%
{In addition to inclusion-based minimization,
 \texttt{metaO.lp}     implements 
 comparisons wrt cardinality \cite{siniso02a} and 
 literal preferences \cite{sakino00a},
 activatable via facts of the form 
 \texttt{optimize(J,W,card)} and \texttt{optimize(J,W,pref)}
 (along with \texttt{prefer(E1,E2)}), respectively.
 For space reasons, the details are omitted here; they
 can be found in the extended version of this paper \cite{gekasc11A}.
 The latter also provides formal results and arguments demonstrating the 
 correctness of our meta-programming technique
 wrt the specification of optimal answer sets in the background.}

\EXT{%
For showing the correctness of our meta-programming approach,
we make use of the property that any answer set generated 
via \texttt{meta.lp} and \texttt{metaD.lp} encapsulates a pair of
answer sets of the reified input program.
\begin{lemma}\label{lem:as}
Let $F$ be the set of facts in the reification of
an extended logic program~$\Pi$.
\SPACY{}
Then, 
$X$ and~$Y$ are answer sets of~$\Pi$
iff
there is a unique answer set $Z$ of 
$F\cup
  \{\text{\texttt{meta.lp}}\}
  \cup
  \{\text{\texttt{metaD.lp}}\}$
such that
$X=\{a \mid \text{\texttt{hold(atom($a$))}}\in Z\}$ and
$Y=\{a \mid \text{\texttt{true(atom($a$))}}\in Z\}$.
\end{lemma}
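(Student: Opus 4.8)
The plan is to prove both directions by decomposing the combined program $P=F\cup\{\text{\texttt{meta.lp}}\}\cup\{\text{\texttt{metaD.lp}}\}$ along its predicate signature and reusing the correctness of each meta-program in isolation. The crucial structural observation is that, apart from the input facts~$F$ and the few auxiliary predicates defined \emph{deterministically} from~$F$ alone (such as the element-extracting predicates and the weight sums of $\summ$ lists), the choice-dependent parts of the two meta-programs range over disjoint predicates: \texttt{meta.lp} derives only atoms of the \texttt{hold}-family, whereas \texttt{metaD.lp} derives only atoms of the \texttt{true}/\texttt{fail}/\texttt{wait}/\texttt{sccw}/\texttt{bot}-families, and neither program refers to the other's output. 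Hence~$P$ admits a clean module decomposition, and the first thing I would do is confirm this disjointness by inspecting Listings~\ref{fig:meta} and~\ref{fig:meta:disjunctive:one}.

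Concretely, I would take as a splitting set~$U$ all ground atoms over the input predicates of~$F$ together with the $F$-determined auxiliary predicates. The bottom part then consists of~$F$ and the rules computing these auxiliary atoms; being stratified over the facts, it has a unique answer set~$B\supseteq F$. By the splitting set theorem (Lifschitz and Turner, extended to disjunctive programs), every answer set of~$P$ extends~$B$ by an answer set of the top program obtained after partial evaluation wrt~$B$. Because the two predicate families above are disjoint, this residual top splits once more into a program~$P_1$ (the \texttt{hold}-rules of \texttt{meta.lp}) and a program~$P_2$ (the \texttt{true}/\dots-rules of \texttt{metaD.lp}) that share no atoms; consequently its answer sets are exactly the pairwise unions of an answer set of $B\cup P_1$ with an answer set of $B\cup P_2$.

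Next I would invoke the two isolated-correctness properties already stated in the paper. For $B\cup P_1$ this is the characterization that the answer sets of \texttt{meta.lp} correspond, via \texttt{hold(atom($\cdot$))}, exactly to the answer sets~$X$ of~$\Pi$; for $B\cup P_2$ it is the property that \texttt{metaD.lp} likewise describes the answer sets of~$\Pi$ via \texttt{true(atom($\cdot$))}, with \texttt{bot} derived precisely when~$\Pi$ has no answer set. Combining these, when~$\Pi$ is satisfiable the answer sets of~$P$ are exactly the sets~$Z$ encoding a pair $(X,Y)$ of answer sets of~$\Pi$ through \texttt{hold} and \texttt{true}; when~$\Pi$ is unsatisfiable, $P_1$ is already inconsistent, so~$P$ has no answer set at all, matching that no pair $X,Y$ of answer sets of~$\Pi$ exists. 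This settles both directions. The uniqueness of~$Z$ for a fixed pair $(X,Y)$ then follows from determinacy inside each module: fixing the \texttt{hold(atom($\cdot$))} atoms to~$X$ functionally determines every remaining atom of \texttt{meta.lp} over~$B$, and fixing the \texttt{true(atom($\cdot$))} atoms to~$Y$ (hence the \texttt{fail}-atoms to the complement of~$Y$) determines every remaining atom of \texttt{metaD.lp}, since~$Y$ being an answer set keeps \texttt{bot} false and forces the single non-saturated minimal extension of the guess.

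I expect the main obstacle to be ensuring that no \emph{spurious}, saturated answer set of~$P$ survives, i.e.\ that the ``all-true'' saturated interpretation of \texttt{metaD.lp} never appears in a product answer set. This is exactly the substance of the saturation argument underlying the isolated correctness of \texttt{metaD.lp}: whenever~$\Pi$ has an answer set, some non-saturated guess is itself an answer set and strictly undercuts the saturated interpretation, which therefore fails minimality and is discarded, while whenever~$\Pi$ has none, the \texttt{meta.lp} module alone already blocks~$P$. The remaining care lies in verifying the predicate disjointness rigorously (so that the top genuinely splits into independent modules) and in handling the unsatisfiable case separately; beyond these points, the argument is routine bookkeeping built on the splitting set theorem and the two cited characterizations.
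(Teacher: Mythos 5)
Your overall architecture---a deterministic bottom computed from~$F$, two atom-disjoint modules for \texttt{meta.lp} and \texttt{metaD.lp}, composition of their answer sets, exclusion of the saturated model when $\Pi$ is consistent, and uniqueness by determinism once the \texttt{hold}/\texttt{true} atoms are fixed---is the same decomposition the paper's proof uses; the paper just phrases the module split via the observations that \texttt{metaD.lp} is invariant under the reduct, contains no integrity constraint, and defines no predicate occurring in $F\cup\{\text{\texttt{meta.lp}}\}$, rather than by naming the splitting set theorem. Your handling of the unsatisfiable case and of why the saturated interpretation never survives into an answer set of the combined program is also correct.

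The genuine gap is the step where you ``invoke the two isolated-correctness properties already stated in the paper.'' For \texttt{meta.lp} this is legitimate. But for \texttt{metaD.lp}, the claim that its answer sets encode exactly the answer sets of~$\Pi$ via \texttt{true(atom($\cdot$))}, with \texttt{bot} derived precisely when no such answer set exists, is not a previously established result you can cite: it is the substantive content of Lemma~\ref{lem:as} itself, and the sentence in Section~\ref{sec:approach} you lean on is an informal description of the encoding's intent, not a theorem. The paper's proof discharges exactly this obligation: it argues that the rules in Lines~33--34 derive \texttt{bot} on interpretations that are not supported models of~$\Pi$, that the rule in Line~57 derives \texttt{bot} when some true atom of a non-trivial SCC falls outside the SCC-local least fixpoint of~$\tp$, and then appeals to Proposition~\ref{prop:tp} to conclude that passing both checks characterizes answer sets of an \emph{extended} program (with choice semantics and $\summ$ bounds)---all of this encoded without default negation so that the check remains sound under saturation. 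Without connecting the guess-and-check part of \texttt{metaD.lp} to Proposition~\ref{prop:tp}, your argument assumes the very property the lemma is meant to establish.
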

\begin{proof}[Proof (Sketch)]
Given that the meta-rules in \texttt{meta.lp} merely
reconstruct $\Pi$ from its reification,
we have that $X$ is an answer set of~$\Pi$ iff
there is a unique answer set~$X'$ of $F\cup\{\text{\texttt{meta.lp}}\}$
such that
$X=\{a \mid \text{\texttt{hold(atom($a$))}}\in X'\}$,
where $X'$ additionally includes $F$,
atoms over \texttt{hold/1} for body elements of~$\Pi$ that hold wrt~$X$,
as well as auxiliary atoms over \texttt{litb/1}, \texttt{eleb/1}, and \texttt{elem/1}.
Along with the fact that the (ground instances of the) rules in \texttt{metaD.lp}
are invariant under reduct,
do not include any integrity constraint,
and do not define any predicates occurring in $F\cup\{\text{\texttt{meta.lp}}\}$,
any answer set~$X'$ of $F\cup\{\text{\texttt{meta.lp}}\}$
can be extended to an answer set $Z=X'\cup Y'$ of 
$F\cup
  \{\text{\texttt{meta.lp}}\}
  \cup
  \{\text{\texttt{metaD.lp}}\}$,
where $Y'$ is a minimal model of~$F$, 
the auxiliary rules in Line~3--13 of \texttt{meta.lp},
and \texttt{metaD.lp}.
In view of the (proper) disjunctive rule in Line~10 of \texttt{metaD.lp}, 
for each $a\in\atom{\Pi}$, we have that
$\text{\texttt{true(atom($a$))}}\in\nolinebreak Y'$ or
$\text{\texttt{fail(atom($a$))}}\in Y'$,
while saturation in Line~61--62 of \texttt{metaD.lp} admits
$\text{\texttt{bot}}\in Y'$ 
only if \texttt{bot} 
is necessarily derived via the rules in Line~33, 34, and~57 of \texttt{metaD.lp}.
As a consequence, the rules in \texttt{metaD.lp} check for the existence of a model~$Y'$ such that 
$Y=\{a \mid \text{\texttt{true(atom($a$))}}\in\nolinebreak Y'\}$ is an answer set of~$\Pi$.
In particular, the rules in Line~33 and~34 of \texttt{metaD.lp}
derive \texttt{bot} wrt interpretations that are not models of~$\Pi$ or
include some atom that does not occur in the head of any rule of~$\Pi$ whose body holds.
Moreover, the rule in Line~57 of \texttt{metaD.lp}
derives \texttt{bot} wrt interpretations containing some atom of a non-trivial SCC of the
dependency graph of~$\Pi$ that does not belong to the least fixpoint of~$\tp$
applied to the SCC.
In view of Proposition~\ref{prop:tp},
we conclude that $Y'$ is a minimal model (more precisely, an answer set) of~$F$, 
the rules in Line~3--13 of \texttt{meta.lp},
and \texttt{metaD.lp} such that $\text{\texttt{bot}}\notin Y'$ iff
$Y$ is an answer set of~$\Pi$.
Hence,
we have that 
$X$ and~$Y$ are answer sets of~$\Pi$
iff
there is an answer set~$Z$  of
$F\cup
  \{\text{\texttt{meta.lp}}\}
  \cup
  \{\text{\texttt{metaD.lp}}\}$
such that
$X=\{a \mid \text{\texttt{hold(atom($a$))}}\in\nolinebreak Z\}$ and
$Y=\{a \mid \text{\texttt{true(atom($a$))}}\in Z\}$.
Finally,
any such answer set~$Z$ of 
$F\cup
  \{\text{\texttt{meta.lp}}\}
  \cup
  \{\text{\texttt{metaD.lp}}\}$
is unique since all atoms in~$Z$
that are not of the form \texttt{hold(atom($a$))}
or \texttt{true(atom($a$))} follow deterministically
once the interpretation of these atoms is fixed.
\end{proof}

We are now ready to show our main correctness result
wrt the specification of optimal answer sets given in the background.
\begin{theorem}\label{thm:sound}
Let $F$ be the set of facts in the reification of
an extended logic program~$\Pi$ and a statement
$\minimize[\ell_1=w_1@J_1,\dots,\ell_k=w_k@J_k]$
such that $\{\ell_1,\dots,\ell_k\}\subseteq\atom{\Pi}\cup\{\naf{a}\mid a\in\atom{\Pi}\}$,
$M\subseteq\{\leq^{w_i}_{J_i},\subseteq^{w_i}_{J_i},{\preceq^{w_i}_{J_i}} \mid 1\leq\nolinebreak i\leq\nolinebreak k\}$,
and 
${\preceq}\subseteq(\atom{\Pi}\cup\{\naf{a}\mid a\in\atom{\Pi}\})\times(\atom{\Pi}\cup\{\naf{a}\mid a\in\atom{\Pi}\})$. 
\SPACY{}
Then, 
$X$ is an optimal answer set of~$\Pi$ wrt~$M$
iff
there is a unique answer set $Z$ of 
\begin{align*}
\Pi_{\text{meta}}={}
& F
  \cup
  \{\text{\texttt{meta.lp}}\}
  \cup
  \{\text{\texttt{metaD.lp}}\}\cup\{\text{\texttt{metaO.lp}} 
  \}
\\ {} \cup {} & 
  \{\text{\texttt{optimize($J$,$w$,card).}} \mid {\leq^w_J} \in M\}
\\ {} \cup {} & 
  \{\text{\texttt{optimize($J$,$w$,incl).}} \mid {\subseteq^w_J} \in M\}
\\ {} \cup {} & 
  \{\text{\texttt{optimize($J$,$w$,pref).}} \mid {\preceq^w_J} \in M\}
\\ {} \cup {} & 
  \{\text{\texttt{prefer($\ell_1'$,$\ell_2'$).}} \mid \ell_1 \preceq \ell_2\}
\end{align*}
such that $X=\{a \mid \text{\texttt{hold(atom($a$))}}\in Z\}$,
where $\ell'=\text{\texttt{pos(atom($\ell$))}}$ if $\ell\in\atom{\Pi}$
and, respectively, $\ell'=\text{\texttt{neg(atom($a$))}}$ if 
$\ell=\naf a$ for $a\in\atom{\Pi}$.
\end{theorem}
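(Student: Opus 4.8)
The plan is to reduce the statement to Lemma~\ref{lem:as} extended by a saturation argument for the additional program \texttt{metaO.lp}. Lemma~\ref{lem:as} already yields that the answer sets~$Z$ of $F\cup\{\text{\texttt{meta.lp}}\}\cup\{\text{\texttt{metaD.lp}}\}$ are in one-to-one correspondence with ordered pairs $(X,Y)$ of answer sets of~$\Pi$, where $X=\{a\mid\text{\texttt{hold(atom($a$))}}\in Z\}$ is the candidate produced deterministically by \texttt{meta.lp} and $Y=\{a\mid\text{\texttt{true(atom($a$))}}\in Z\}$ is the counterexample guessed and verified by \texttt{metaD.lp}. The role of \texttt{metaO.lp}, together with the integrity constraint in its Line~27 stipulating \texttt{bot}, is to turn this counterexample machinery into a saturation-based $\mathit{coNP}$ check: a candidate~$X$ is to survive exactly when it is infeasible to guess an answer set~$Y$ of~$\Pi$ that dominates~$X$ wrt~$M$. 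First I would fix~$F$ and, as in the proof of Lemma~\ref{lem:as}, treat the candidate~$X$ (and all auxiliary \texttt{meta.lp} atoms) as determined, so that only the \texttt{true/1}-, \texttt{fail/1}- and \texttt{bot}-atoms governed by \texttt{metaD.lp} and \texttt{metaO.lp} remain to be analysed.

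The core is the standard saturation pattern. With~$X$ fixed, every consistent choice over the disjunctive guess in Line~10 of \texttt{metaD.lp} yields a particular interpretation~$Y$. Either \texttt{bot} is forced---by \texttt{metaD.lp} when~$Y$ is not an answer set of~$\Pi$ (support in Line~33--34 and, via Proposition~\ref{prop:tp}, acyclic derivability in Line~57), or by \texttt{metaO.lp} when~$Y$ does not dominate~$X$---in which case the saturation rules in Line~61--62 derive all \texttt{true/1}- and \texttt{fail/1}-atoms; or \texttt{bot} is \emph{not} forced, which happens precisely when~$Y$ is an answer set of~$\Pi$ that dominates~$X$ wrt~$M$. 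The integrity constraint in Line~27 of \texttt{metaO.lp} forbids any answer set without \texttt{bot}, so the only candidate for an answer set~$Z$ is the saturated interpretation in which \texttt{bot} and all counterexample atoms hold. This saturated interpretation is a subset-minimal model of its reduct $\reduct{\Pi_{\text{meta}}}{Z}$---hence an answer set---iff no strictly smaller model of $\reduct{\Pi_{\text{meta}}}{Z}$ survives, i.e., iff there is no interpretation encoding an answer set~$Y$ of~$\Pi$ that dominates~$X$. Together with the requirement from \texttt{meta.lp} that~$X$ itself be an answer set of~$\Pi$, this matches the background definition of~$X$ being optimal wrt~$M$, and uniqueness of~$Z$ follows exactly as in Lemma~\ref{lem:as}, the remaining atoms being determined once \texttt{hold/1} and \texttt{true/1} are fixed.

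The hard part is the inner equivalence used above: for a fixed pair~$(X,Y)$, the rules of \texttt{metaO.lp} must derive \texttt{bot} from the non-saturated interpretation if and only if~$Y$ does \emph{not} dominate~$X$ wrt~$M$. Establishing this requires, first, the local correctness of the three optimization criteria, namely that the rules implementing \texttt{card}, \texttt{incl}, and \texttt{pref} derive \texttt{equal(J,W,O)} exactly when the counterexample is at least as good as the candidate for criterion~$\diamond^w_J$, and \texttt{worse(J)} exactly when it is strictly worse or incomparable; and, second, that the priority chaining in Line~13--23 aggregates these per-level, per-weight verdicts into the global dominance test---that there exist a level~$J$ and weight~$w$ with $X\diamond^w_J Y$ failing while $Y\diamond^{w'}_{J'}X$ holds for all $\diamond^{w'}_{J'}\in M$ with $J'\geq J$. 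The subtlety I expect to dominate the argument is that all of this must be sound \emph{under saturation}, i.e., without default negation: every \texttt{equal}/\texttt{worse} conclusion has to be derivable monotonically, which is why the $\summ$-constraint evaluation in \texttt{metaD.lp} and the count-down scheme in \texttt{metaO.lp} are phrased solely via lower bounds over the literals that hold and those that do not. Once this per-criterion and per-level correspondence is verified, the saturation argument of the previous paragraph closes the proof.
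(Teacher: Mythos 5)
Your proposal follows essentially the same route as the paper's proof: it invokes Lemma~\ref{lem:as} for the candidate/counterexample correspondence, reduces the theorem to showing that \texttt{metaO.lp} derives \texttt{bot} exactly when the guessed answer set~$Y$ fails to dominate~$X$ wrt~$M$ (via per-criterion correctness of \texttt{equal}/\texttt{worse} plus the priority chaining), and closes with the saturation argument and the determinacy of the remaining atoms for uniqueness. One small correction: \texttt{equal($J$,$W$,$O$)} encodes that $X \diamond^w_J Y$ holds, i.e., that the \emph{candidate} is at least as good as the counterexample at that level and weight (so that $Y$ offers no strict improvement there), not that the counterexample is at least as good as the candidate as your informal gloss states; your subsequent formal statement of the global dominance test is, however, the correct one.
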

\begin{proof}[Proof (Sketch)]
By Lemma~\ref{lem:as},
we have that $X$ and $Y$ are answer sets of~$\Pi$
iff 
there is a (unique) answer set~$Z'$ of
$F\cup
  \{\text{\texttt{meta.lp}}\}
  \cup
  \{\text{\texttt{metaD.lp}}\}$
such that
$X=\{a \mid \text{\texttt{hold(atom($a$))}}\in Z'\}$ and
$Y=\{a \mid \text{\texttt{true(atom($a$))}}\in Z'\}$.
Since $\{\text{\texttt{metaO.lp}}\}\setminus\{\text{\texttt{:- not bot.}}\}$
is stratified,
there is a unique answer set $Z$ of
$Z'\cup(\{\text{\texttt{metaO.lp}}\}\setminus\{\text{\texttt{:- not bot.}}\})
 \cup \{\text{\texttt{optimize($J$,$w$,card).}} \mid {\leq^w_J} \in M\} \cup
 \{\text{\texttt{optimize($J$,$w$,incl).}} \mid {\subseteq^w_J} \in M\} \cup
 \{\text{\texttt{optimize($J$,$w$,pref).}} \mid {\preceq^w_J} \in M\} \cup
 \{\text{\texttt{prefer($\ell_1'$,$\ell_2'$).}} \mid \ell_1 \preceq \ell_2\}$.
In the following, we argue that,
for any $\diamond^w_J\in M$ and \texttt{O}${}={}$\texttt{card}${}/{}$\texttt{incl}${}/{}$\texttt{pref}
if $\diamond={\leq}/{\subseteq}/{\preceq}$,
we have that $X \diamond^w_J Y$ holds iff \texttt{equal($J$,$w$,O)}${}\in Z$,
while $Y \diamond^w_J X$ does not hold for some $\diamond^w_J\in M$
iff \texttt{worse($J$)}${}\in Z$.
We thus consider the possible criteria for every $\diamond^w_J\in M$:
\begin{enumerate}
\item 
If $\diamond={\leq}$,
we have that atoms \texttt{count(S,$w$,Q,0$\dots$I)}
are derived via the rules in Line~7, 31--32, and 34--37 of \texttt{metaO.lp}, 
where \texttt{Q} is
the maximum index such that facts of the form \texttt{minimize($J$,S)} and
\texttt{wlist(S,Q,$\ell'$,W)} belong to~$F$.
Then, we have that~\texttt{I} is the cardinality of the multiset of
occurrences of $\ell=w@J$ in
$\minimize[\ell_1=w_1@J_1,\dots,\ell_k=w_k@J_k]$
such that $X\models\ell$.
Starting from~\texttt{I} in Line~39,
\texttt{cdown(S,$w$,-1,0)} is derived via the rules in Line~40--42 
iff
the cardinality of the multiset of
occurrences of $\ell=w@J$ in
$\minimize[\ell_1=w_1@J_1,\dots,\ell_k=w_k@J_k]$
such that $Y\models\ell$ is at least~\texttt{I},
and \texttt{cdown(S,$w$,-1,-1)} is derived via the same rules
iff
the cardinality of this multiset is greater than~\texttt{I}.
That is, $X \leq^w_J Y$ holds iff \texttt{equal($J$,$w$,card)}
is derived via the rule in Line~44,
and \texttt{worse($J$)} is derived via the rule in Line~46 
if $Y \nleq^w_J X$.
\item
If $\diamond={\subseteq}$,
an atom \texttt{ndiff($\ell_i'$)} such that
$\ell_i=w@J$ for some $1\leq i\leq k$ is derived via the
rules in Line~7 and 50--57 of \texttt{metaO.lp} iff
$X\not\models\ell_i$ or $Y\models\ell_i$.
Thus,
\texttt{equal($J$,$w$,incl)} is in turn derived via the rule in Line~59
iff $X\subseteq^w_J Y$ holds.
Moreover,
   \texttt{worse($J$)} is derived via the rules in Line~61--62 and~63--64 if
there is some $\ell_i=w@J$ for $1\leq i\leq k$
such that $Y\models\ell_i$ and $X\not\models\ell_i$,
i.e., if $Y\nsubseteq^w_J X$.
\item
If $\diamond={\preceq}$,
an atom \texttt{cando($\ell_i'$)}, \texttt{nocan($\ell_i'$)}, \texttt{condo($\ell_i'$)},
or \texttt{nocon($\ell_i'$)} such that
$\ell_i=w@J$ for some $1\leq i\leq k$ is derived via the
rules in Line~7 and 68--92 of \texttt{metaO.lp} iff
$X\models\ell_i$ and $Y\not\models\ell_i$,
$X\not\models\ell_i$ or $Y\models\ell_i$,
$X\not\models\ell_i$ and $Y\models\ell_i$, or, respectively,
$X\models\ell_i$ or $Y\not\models\ell_i$.
Via the rule in Line~94--95,
we further derive \texttt{cando(S,$w$,$\ell'$)} iff \texttt{minimize($J$,S)},
\texttt{wlist(S,\_,$\ell'$,$w$)}, and \texttt{wlist(S,\_,$\ell_1'$,$w$)} belong to~$F$
such that $\ell\preceq\ell_1$, $X\models\ell$, $Y\not\models\ell$, 
$X\not\models\ell_1$, and $Y\models\ell_1$.
Then, $X\preceq^w_J Y$ holds
if there is no literal $\ell_i$ such that 
$\ell_i=w@J$ for some $1\leq i\leq k$,
$\ell_i\preceq\ell$, $\ell\npreceq\ell_i$,
$Y\models\ell_i$, and $X\not\models\ell_i$, i.e.,
if \texttt{nocon($\ell_i'$)} holds for every~$\ell_i'$
such that \texttt{wlist(S,\_,$\ell_i'$,$w$)} and \texttt{prefer($\ell_i'$,$\ell'$)},
but not \texttt{prefer($\ell'$,$\ell_i'$)}, belong to~$F$.
Hence, \texttt{equal($J$,$w$,pref)} is derived via the rule in Line~103--104
iff $X\preceq^w_J Y$ holds.
Conversely,
a literal $\ell_i$ such that 
$\ell_i=w@J$ for some $1\leq i\leq k$ provides no indication for
$Y\preceq^w_J X$ if
$Y\not\models\ell_i$ or $X\models\ell_i$,
if there is no 
$\ell=w@J$ in
$\minimize[\ell_1=w_1@J_1,\linebreak[1]\dots,\linebreak[1]\ell_k=w_k@J_k]$
such that $\ell_i\preceq\ell$, $X\models\ell$, and $Y\not\models\ell$, or if
$\ell\preceq\ell_i$, $\ell_i\npreceq\ell$, $X\models\ell$, and $Y\not\models\ell$
for some $\ell=w@J$ in
$\minimize[\ell_1=\nolinebreak w_1@J_1,\linebreak[1]\dots,\linebreak[1]\ell_k=w_k@J_k]$.
Some of these cases applies iff
\texttt{nocon(S,$w$,$\ell_i'$)} is derived via the rules in Line~97--101,
so that 
\texttt{worse($J$)} is in turn derived via the rule in Line~106
if $Y\npreceq^w_J X$.
\end{enumerate}
As we have above investigated all rules that can possibly derive \texttt{worse($J$)}, 
i.e., the rules in Line~46, 61--64, and~106 of \texttt{metaO.lp},
and since some of them applies iff
$\diamond^w_J\in M$ such that
$Y \diamond^w_J X$ does not hold,
we now conclude that \texttt{worse($J$)}${}\in Z$ iff 
$Y \diamond^w_J X$ does not hold for some $\diamond^w_J\in M$.
In addition, the above cases yield that \texttt{equal($J$,$w$,O)}${}\in Z$
for \texttt{O}${}={}$\texttt{card}${}/{}$\texttt{incl}${}/{}$\texttt{pref} iff
$X \diamond^w_J Y$ holds for $\diamond={\leq}/{\subseteq}/{\preceq}$ such that $\diamond^w_J\in M$.
In view of the rules in Line~13--23,
we further conclude that
\texttt{bot}${}\in Z$ iff
there are no priority level~$J$ and weight~$w$ such that
$Y \diamond^{w'}_{J'} X$ holds for all $\diamond^{w'}_{J'}\in M$ where $J' \geq J$
and
$X \diamond^w_J Y$ does not hold for $\diamond^w_J\in M$.
That is, \texttt{bot}${}\in Z$ iff $X$ is an answer set of~$\Pi$ that is not dominated by
the answer set~$Y$ of~$\Pi$ wrt~$M$.
However, for any answer set~$Z$ of $\Pi_{\text{meta}}$,
saturation in Line~61--62 of \texttt{metaD.lp} admits
\text{\texttt{bot}}${}\in Z$ 
only if \texttt{bot} is derived wrt every answer set~$Y$ of~$\Pi$, i.e.,
if there is no answer set~$Y$ of~$\Pi$ that dominates~$X$ wrt~$M$.
Hence,
the integrity constraint in Line~27 of~\texttt{metaO.lp} selects
exactly the answer sets~$Z$ of $\Pi_{\text{meta}}\setminus\{\texttt{:- not bot.}\}$
such that $X=\{a \mid \text{\texttt{hold(atom($a$))}}\in Z\}$
is an undominated and thus optimal answer set of~$\Pi$ wrt~$M$.
Finally,
any such answer set~$Z$ of $\Pi_{\text{meta}}$ is unique since all atoms in~$Z$ 
that are not of the form \texttt{hold(atom($a$))}
follow deterministically once the interpretation of these atoms is fixed.
In particular,
\texttt{bot} and \texttt{true(atom($a$))} as well as \texttt{fail(atom($a$))} for all $a\in\atom{\Pi}$
hold in view of the integrity constraint in Line~27 of~\texttt{metaO.lp}
along with saturation in Line~61--62 of \texttt{metaD.lp}.
\end{proof}%
}%
{}

Regarding the computational complexity of tasks that can be addressed
using our meta-programming approach to optimization,
we first note that deciding whether there is an optimal answer set is in $\mathit{NP}$,
as the existence of some answer set (decidable by means of \texttt{meta.lp} only)
is sufficient for concluding that there is also an optimal one.
However, the inherent complexity becomes more sensible if we consider
the question of whether some atom~$a$ 
belongs to an optimal answer set.
To decide it, one can augment the reified input program
(but not the input program itself), \texttt{meta.lp}, \texttt{metaD.lp}, and \texttt{metaO.lp}
with the integrity constraint \texttt{:- not hold(atom($a$))}.
Then, several complex optimization criteria at a single priority level~\texttt{1} 
lead to completeness for~$\Sigma_2^P$, the second level of the polynomial time hierarchy,
thus showing that disjunctive ASP is appropriate to implement them.
To see this, note that deciding whether an atom~$a$ belongs to some
answer set of a positive disjunctive logic program is $\Sigma_2^P$-complete \cite{eitgot95a}.
When disjunctions $a_1\vee\dots\vee a_k$ in the heads of rules
are rewritten to $1\ \summ[a_1=1,\dots,a_k=1]$,
the question of whether an atom~$a$ belongs to an
answer set of the original program can be addressed by
reifying the rewritten program,
adding the integrity constraint \texttt{:- not hold(atom($a$))},
and applying \texttt{meta.lp}, \texttt{metaD.lp}, and \texttt{metaO.lp} wrt several optimization criteria.
For one, we can include a $\minimize$ statement over all atoms of the input program,
each associated with a different weight, to exploit the Pareto improvement implemented
in \texttt{metaO.lp} for refuting a candidate answer set including~$a$
if it does not correspond to a minimal model, i.e., an answer set
of the original program.
Alternatively, we can include a $\minimize$ statement over all atoms of the input program,
each having the weight~\texttt{1}, and augment the meta-program with the fact
\texttt{optimize(1,1,incl)}.
We could also use a $\minimize$ statement over all atoms~$a_i$ of the input program
along with their negation, each having the weight~\texttt{1}, 
and add the facts
 \texttt{optimize(1,1,pref)} as well as
\texttt{prefer(neg(atom($a_i$)),pos(atom($a_i$)))}. 
In view of these reductions, we conclude that
Pareto efficiency, inclusion, and literal preferences 
independently capture computational tasks located at the second level
of the polynomial time hierarchy, and our meta-programs allow for addressing
them via an extended program along with facts (and possibly also integrity constraints)
steering optimization relative to its reification.



\section{Applications:  A Case Study}\label{sec:application}

While the approach of Eiter and Polleres \citeyear{eitpol06a}
consists of combining
two separate logic programs, one for ``guessing'' and a second
one for ``checking,'' into a disjunctive program addressing both tasks,
our meta-programming technique applies to a single (reified) input
program along with complex optimization criteria. 
In fact, we provide a generic implementation of such criteria
on top of extended programs encoding solution spaces.
Hence, our meta-programming technique allows for a convenient representation
of reasoning tasks 
in which testing the optimality of solutions to an
underlying problem in $\mathit{NP}$ lifts the complexity to $\Sigma_2^P$-hardness.
Respective formalisms include 
ordinary, parallel, as well as prioritized circumscription \cite{mccarthy80,lifschitz85b},
minimal consistency-based diagnosis \cite{reiter87b}, and
preferred extensions of argumentation frameworks \cite{dung95a}.
Similarly, Pareto efficiency is an important optimality condition in
decision making \cite{chenlama07a} and system design \cite{gries04a}.
In the following, we illustrate the application of our approach
on the example of an existing real-world application:
repair wrt large gene-regulatory networks \cite{geguivscsithve10a}.

\lstinputlisting[float=t,caption={Encoding of repair wrt regulatory networks and experiment data (\texttt{\small repair.lp}).},captionpos=b,label=fig:repair]{repair.lp}
Listing~\ref{fig:repair} shows a simplified version of the repair encoding
given in \cite{geguivscsithve10a}.
It applies to a regulatory network, a directed graph with (partially) labeled edges,
represented by facts of the predicates \texttt{vertex/1}, \texttt{edge/2},
and \texttt{obs\_elabel/3}, where a label~\texttt{S} is 
\texttt{1} (activation) or \texttt{-1} (inhibition).
In addition, the data of experiments labeled~\texttt{P}
are provided by facts of the predicates \texttt{exp/1}, \texttt{inp/2}
denoting input vertices (subject to perturbations), and
\texttt{obs\_vlabel/3}, where a label~\texttt{S} is again
\texttt{1} (increase) or \texttt{-1} (decrease).
The regulatory network is consistent with the experiment data if
there are total labelings of edges and vertices (for each experiment~\texttt{P})
such that the label of every non-input vertex~\texttt{V} is explained by the 
influence of some of its regulators~\texttt{U},
where the influence is the product \texttt{S*T} of the edge label~\texttt{S}
and the label~\texttt{T} of~\texttt{U} (in experiment~\texttt{P}).
In the practice of systems biology, regulatory networks and 
experiment data \EXT{are often collected from heterogeneous sources,
and likewise incomplete or noisy.
In view of this,
it is very likely that a regulatory network and experiment data are}{often turn out to be} mutually inconsistent,
which makes it highly non-trivial to draw biologically 
meaningful conclusions in an automated way.
To address this shortage,
several repair operations were devised in \cite{geguivscsithve10a},
which can be enabled via facts of the form \texttt{repair(K,J,W)},
where \texttt{K} indicates a certain kind of admissible repair operations,
\texttt{J} a priority level, and
\texttt{W} a weight.
The repair operations~\texttt{R} to apply
are selected via the rule in Line~14 of Listing~\ref{fig:repair}, and
their effects are propagated via the rules in Line~18--29,
thus obtaining total edge and vertex labelings
witnessing the reestablishment of consistency.
Given that applications of repair operations modify a
regulatory network or experiment data\EXT{ (depending on the kind of operations)}{},
we are interested in applying few operations only,
which is expressed by the $\minimize$ statement in Line~33.

A reasonable repair configuration could consist of facts of the following form:
\begin{description}
\item[\textnormal{\texttt{repair(ivert,J$_1$,W$_1$).}}]
admitting to turn vertices into inputs in \emph{all} experiments.
\item[\textnormal{\texttt{repair(eflip,J$_2$,W$_2$).}}]
admitting network modifications by flipping edge labels.
\item[\textnormal{\texttt{repair(pvert,J$_3$,W$_3$).}}]
admitting to turn vertices into inputs in specific experiments.
\item[\textnormal{\texttt{repair(vflip,J$_4$,W$_4$).}}]
admitting data modifications by flipping vertex labels.
\end{description}
While the kinds of repair referred to by \texttt{ivert} and \texttt{eflip}
operate primarily on a network (in view of incompleteness or incorrectness),
the ones denoted by \texttt{pvert} and \texttt{vflip} mainly address the data
(which can be noisy).
If we penalize all repair operations uniformly via
\texttt{J}${}={}$\texttt{J}$_1={}$\texttt{J}$_2={}$\texttt{J}$_3={}$\texttt{J}$_4$ and
\texttt{W}${}={}$\texttt{W}$_1={}$\texttt{W}$_2={}$\texttt{W}$_3={}$\texttt{W}$_4$,
the instantiation of the $\minimize$ statement in Line~33
represents ordinary cardinality-based optimization,
assembled in solvers like \clasp\ 
and \smodels. 
However, by adding \texttt{optimize(J,W,incl)} as a fact,
we can easily switch to inclusion-based minimization and
use a disjunctive solver like \claspD\ 
to solve the more complex problem.
While our meta-programs enable such a shift of optimization criteria by means of 
adding just one fact,
a direct disjunctive encoding of inclusion-based minimization has been provided in 
\cite{geguivscsithve10a}; note that the latter is by far more involved than the
basic  repair encoding in Listing~\ref{fig:repair}.
Furthermore,
our meta-programming approach allows us to distinguish between
different kinds of repair operations (without prioritizing them)
and optimize wrt Pareto efficiency.
To accomplish this,
one only needs to pick unequal values for 
\texttt{W}$_1,\dots,{}$\texttt{W}$_4$,
where cardinality-based minimization wrt each \texttt{W}$_i$ can selectively be 
replaced by inclusion via providing a fact \texttt{optimize(J,W$_i$,incl)}.
Finally, we can choose to rank kinds of repair operations
by providing different priority levels \texttt{J}$_1,\dots,{}$\texttt{J}$_4$.
\EXT{%
For instance, network repairs may be considered as ``more drastic'' operations
than data repairs, which can be expressed by assigning levels such that
\texttt{J}$_1={}$\texttt{J}$_2>{}$\texttt{J}$_3={}$\texttt{J}$_4$.
In fact, the kinds of admissible repair operations as well as appropriate penalties
for applying them depend on the biological application and experience. }{}%
In this respect, the flexibility gained due to meta-programming allows for deploying and
comparing different optimization criteria, e.g., regarding the
accuracy of resulting predictions (cf.\ \cite{geguivscsithve10a}).

For giving an account of the practical capabilities of our meta-programming approach,
we empirically compared it to the direct encoding of inclusion-based minimization in
\cite{geguivscsithve10a}.
To this end, we ran \gringo\ version 3.0.3 and \claspD\ version 1.1
on 100 instances
\EXT{(random samples from real data of two biological experiments,
as also used in \cite{geguivscsithve10a}) }{}%
wrt three kinds of admissible repair operations,
resulting in 300 runs each with our meta-programs and with the direct encoding.
All runs have been performed sequentially on a machine equipped with Intel Xeon E5520 processors
and 48~GB main memory under Linux, imposing a time limit of 4000 sec per run.
\EXT{(In view of moderate memory consumption, a space limit was not needed.) }{}%
To our own surprise,
more runs were completed in time with the meta-programs than with the direct encoding:
219 versus 150.%
\footnote{All instances and detailed results are available at \cite{metasp}.}
The disadvantages of the direct encoding show that further gearing would be
required to improve solving efficiency,
which adds to the difficulty of furnishing a functional saturation-based encoding.
In view of this, we conclude that our meta-programming approach to complex optimization
is an eligible and viable alternative.
However,
enhancements of disjunctive ASP solvers boosting its performance
would still be desirable.



\section{Discussion}\label{sec:discussion}

\EXT{%
The major contribution of this work is the provision of easy modeling capacities
for expressing complex preferences involving a higher computational complexity.
We accomplish this by developing advanced meta-modeling techniques that allow us
to reinterpret $\minimize$ statements by means of ASP.
This methodology offers a generic saturation-based implementation technique for
testing     various optimality conditions wrt answer sets of extended logic programs,
as used with \lparse\ and \gringo.
At \cite{metasp},
we provide enhanced versions of our meta-encodings,
supporting additional features of \gringo's input language \cite{potasscoManual}
as well as some basic optimizations with regard to grounding performance.
These encodings provide a ready-to-use platform for implementing
applications involving complex optimization criteria in ASP.}{}

Our integral 
approach to modeling complex optimization criteria in ASP
brings about a number of individual contributions.
To begin with, we introduce the reification capacities of our grounder \gringo\
along with the associated meta-encoding,
paving the way to the immediate availability of meta-modeling techniques.
In fact, the full version of the basic meta-encoding in
Listing~\ref{fig:reified}\EXT{}{, obtainable at \cite{metasp},} covers the complete
language of \gringo, including disjunctions and diverse aggregates. 
Moreover,
our meta-modeling techniques
provide a general account of saturation and, 
thus, abolish its compulsory replication
for expressing complex preferences.
Of particular interest is the stability property of answer sets serving as
implicit counterexamples.
Unlike the approach of Eiter and Polleres \citeyear{eitpol06a},
our encoding avoids ``guessing'' level mappings.
Also, our target language involves choice rules and $\summ$ constraints
\cite{siniso02a}, and we are unaware of any pre-existing meta-encoding  of
corresponding answer sets,
neither as candidates nor as counterexamples. 
Likewise, related meta-programming approaches for generating 
consequences of logic programs \cite{fabwol09a}
or
explanations wrt debugging queries \cite{oeputo10a}
do not consider such aggregates (but disjunctions in object programs).

We exploit the two-dimensionality of $\minimize$ statements by using levels and
weights for combining a lexicographic ranking with Pareto efficiency.
At each level, \EXT{one may choose whether }{}groups of literals sharing the same
weight \EXT{are}{can be} compared wrt \EXT{cardinality or }{}inclusion.
\EXT{Also, summation can be modeled by exploiting the multiset property of $\minimize$ statements. }{}%
This is extended \EXT{by }{in \cite{gekasc11A} by cardinality-based minimization and }the framework of \cite{sakino00a}, relying
on a preference relation among literals (given in addition to $\minimize$ statements)\EXT{}{;
the augmented encoding is also available at \cite{metasp}}.
In fact, the approach of Section~\ref{sec:approach} allows for capturing the
special cases of parallel and prioritized circumscription, investigated by Janhunen
and Oikarinen in~\citeyear{janoik04a,oikjan08b}.
An interesting future extension is the encoding of 
optimality conditions for logic programs with ordered
disjunction \cite{brnisy04a}, whose custom-made implementation in the prototype \psmodels\
interleaves two \smodels\ oracles for accomplishing a generate-and-test approach
similar to the idea of our meta-programs.
Ultimately, our approach could serve as an implementation platform for answer
set optimization \cite{brnitr03a} and the preference description language
proposed in \cite{brewka04b}.
Last but not least, our meta-programs furnish a rich and readily available source
of $\Sigma_2^P$-hard challenge problems, fostering the development of ASP solvers for
disjunctive logic programming.


\EXT{\paragraph{Acknowledgments.}}
{\smallskip\noindent\textit{Acknowledgments. }}
This work was partly funded by DFG grant SCHA 550/8-2.
We are grateful to Tomi Janhunen, Ilkka Niemel{\"a}, and the referees
for their helpful comments.


\EXT{\newpage}{}
\bibliographystyle{acmtrans} 

\begin{thebibliography}{}

\bibitem[\protect\citeauthoryear{Baral}{Baral}{2003}]{baral02a}
{\sc Baral, C.} 2003.
\newblock {\em Knowledge Representation, Reasoning and Declarative Problem
  Solving}.
\newblock Cambridge University Press.


\bibitem[\protect\citeauthoryear{Brewka}{Brewka}{2004}]{brewka04b}
{\sc Brewka, G.} 2004.
\newblock Answer sets: From constraint programming towards qualitative
  optimization.
\newblock In {\em Proceedings of the Seventh International Conference on Logic
  Programming and Nonmonotonic Reasoning (LPNMR'04)}, {V.~Lifschitz} {and}
  {I.~Niemel{\"a}}, Eds. Lecture Notes in Artificial Intelligence, vol. 2923.
  Springer-Verlag, 34--46.

\bibitem[\protect\citeauthoryear{Brewka, Niemel{\"a}, and Syrj{\"a}nen}{Brewka
  et~al\mbox{.}}{2004}]{brnisy04a}
{\sc Brewka, G.}, {\sc Niemel{\"a}, I.}, {\sc and} {\sc Syrj{\"a}nen, T.} 2004.
\newblock Logic programs with ordered disjunction.
\newblock {\em Computational Intelligence\/}~{\em 20,\/}~2, 335--357.

\bibitem[\protect\citeauthoryear{Brewka, Niemel{\"a}, and Truszczynski}{Brewka
  et~al\mbox{.}}{2003}]{brnitr03a}
{\sc Brewka, G.}, {\sc Niemel{\"a}, I.}, {\sc and} {\sc Truszczynski, M.} 2003.
\newblock Answer set optimization.
\newblock In {\em Proceedings of the Eighteenth International Joint Conference
  on Artificial Intelligence (IJCAI'03)}, {G.~Gottlob} {and} {T.~Walsh}, Eds.
  Morgan Kaufmann Publishers, 867--872.

\bibitem[\protect\citeauthoryear{Chevaleyre, Endriss, Lang, and
  Maudet}{Chevaleyre et~al\mbox{.}}{2007}]{chenlama07a}
{\sc Chevaleyre, Y.}, {\sc Endriss, U.}, {\sc Lang, J.}, {\sc and} {\sc Maudet,
  N.} 2007.
\newblock A short introduction to computational social choice.
\newblock In {\em Proceedings of the Thirty-third Conference on Current Trends in
  Theory and Practice of Computer Science (SOFSEM'07)}, {J.~{van Leeuwen}},
  {G.~Italiano}, {W.~{van der Hoek}}, {C.~Meinel}, {H.~Sack}, {and}
  {F.~Plasil}, Eds. Lecture Notes in Computer Science, vol. 4362.
  Springer-Verlag, 51--69.

\bibitem[\protect\citeauthoryear{Delgrande, Schaub, and Tompits}{Delgrande
  et~al\mbox{.}}{2003}]{descto02a}
{\sc Delgrande, J.}, {\sc Schaub, T.}, {\sc and} {\sc Tompits, H.} 2003.
\newblock A framework for compiling preferences in logic programs.
\newblock {\em Theory and Practice of Logic Programming\/}~{\em 3,\/}~2, 
  129--187.

\bibitem[\protect\citeauthoryear{Drescher, Gebser, Grote, Kaufmann, K{\"o}nig,
  Ostrowski, and Schaub}{Drescher et~al\mbox{.}}{2008}]{drgegrkakoossc08a}
{\sc Drescher, C.}, {\sc Gebser, M.}, {\sc Grote, T.}, {\sc Kaufmann, B.}, {\sc
  K{\"o}nig, A.}, {\sc Ostrowski, M.}, {\sc and} {\sc Schaub, T.} 2008.
\newblock Conflict-driven disjunctive answer set solving.
\newblock In {\em Proceedings of the Eleventh International Conference on
  Principles of Knowledge Representation and Reasoning (KR'08)}, {G.~Brewka}
  {and} {J.~Lang}, Eds. AAAI Press, 422--432.

\bibitem[\protect\citeauthoryear{Dung}{Dung}{1995}]{dung95a}
{\sc Dung, P.} 1995.
\newblock On the acceptability of arguments and its fundamental role in
  nonmonotonic reasoning, logic programming and $n$-person games.
\newblock {\em Artificial Intelligence\/}~{\em 77,\/}~2, 321--357.

\bibitem[\protect\citeauthoryear{Eiter, Faber, Leone, and Pfeifer}{Eiter
  et~al\mbox{.}}{2003}]{eifalepf03a}
{\sc Eiter, T.}, {\sc Faber, W.}, {\sc Leone, N.}, {\sc and} {\sc Pfeifer, G.}
  2003.
\newblock Computing preferred answer sets by meta-interpretation in answer set
  programming.
\newblock {\em Theory and Practice of Logic Programming\/}~{\em 3,\/}~4-5,
  463--498.

\bibitem[\protect\citeauthoryear{Eiter and Gottlob}{Eiter and
  Gottlob}{1995}]{eitgot95a}
{\sc Eiter, T.} {\sc and} {\sc Gottlob, G.} 1995.
\newblock On the computational cost of disjunctive logic programming:
  Propositional case.
\newblock {\em Annals of Mathematics and Artificial Intelligence\/}~{\em
  15,\/}~3-4, 289--323.

\bibitem[\protect\citeauthoryear{Eiter and Polleres}{Eiter and
  Polleres}{2006}]{eitpol06a}
{\sc Eiter, T.} {\sc and} {\sc Polleres, A.} 2006.
\newblock Towards automated integration of guess and check programs in answer
  set programming: a meta-interpreter and applications.
\newblock {\em Theory and Practice of Logic Programming\/}~{\em 6,\/}~1-2,
  23--60.

\bibitem[\protect\citeauthoryear{Faber and Woltran}{Faber and
  Woltran}{2009}]{fabwol09a}
{\sc Faber, W.} {\sc and} {\sc Woltran, S.} 2009.
\newblock Manifold answer-set programs for meta-reasoning.
\newblock In {\em Proceedings of the Tenth International Conference on Logic
  Programming and Nonmonotonic Reasoning (LPNMR'09)}, {E.~Erdem}, {F.~Lin},
  {and} {T.~Schaub}, Eds. Lecture Notes in Artificial Intelligence, vol. 5753.
  Springer-Verlag, 115--128.

\bibitem[\protect\citeauthoryear{Fages}{Fages}{1994}]{fages94a}
{\sc Fages, F.} 1994.
\newblock Consistency of {C}lark's completion and the existence of stable
  models.
\newblock {\em Journal of Methods of Logic in Computer Science\/}~{\em 1},
  51--60.

\bibitem[\protect\citeauthoryear{Ferraris}{Ferraris}{2005}]{ferraris05a}
{\sc Ferraris, P.} 2005.
\newblock Answer sets for propositional theories.
\newblock In {\em Proceedings of the Eighth International Conference on Logic
  Programming and Nonmonotonic Reasoning (LPNMR'05)}, {C.~Baral}, {G.~Greco},
  {N.~Leone}, {and} {G.~Terracina}, Eds. Lecture Notes in Artificial
  Intelligence, vol. 3662. Springer-Verlag, 119--131.

\bibitem[\protect\citeauthoryear{Garey and Johnson}{Garey and
  Johnson}{1979}]{garjoh79}
{\sc Garey, M.} {\sc and} {\sc Johnson, D.} 1979.
\newblock {\em Computers and Intractability: A Guide to the Theory of
  {NP}-Completeness}.
\newblock W. Freeman and Co. 

\bibitem[\protect\citeauthoryear{Gebser, Guziolowski, Ivanchev, Schaub, Siegel,
  Thiele, and Veber}{Gebser et~al\mbox{.}}{2010}]{geguivscsithve10a}
{\sc Gebser, M.}, {\sc Guziolowski, C.}, {\sc Ivanchev, M.}, {\sc Schaub, T.},
  {\sc Siegel, A.}, {\sc Thiele, S.}, {\sc and} {\sc Veber, P.} 2010.
\newblock Repair and prediction (under inconsistency) in large biological
  networks with answer set programming.
\newblock In {\em Proceedings of the Twelfth International Conference on
  Principles of Knowledge Representation and Reasoning (KR'10)}, {F.~Lin} {and}
  {U.~Sattler}, Eds. AAAI Press, 497--507.

\bibitem[\protect\citeauthoryear{Gebser, Kaminski, Kaufmann, Ostrowski, Schaub,
  and Thiele}{Gebser et~al\mbox{.}}{\!\!}]{potasscoManual}
{\sc Gebser, M.}, {\sc Kaminski, R.}, {\sc Kaufmann, B.}, {\sc Ostrowski, M.},
  {\sc Schaub, T.}, {\sc and} {\sc Thiele, S.}
\newblock A user's guide to \texttt{gringo}, \texttt{clasp}, \texttt{clingo},
  and \texttt{iclingo}.\par
\newblock Available at {\texttt{http://potassco.sourceforge.net}}.

\EXT{}{%
\bibitem[\protect\citeauthoryear{Gebser, Kaminski, and Schaub}{Gebser
  et~al\mbox{.}}{2011}]{gekasc11A}
{\sc Gebser, M.}, {\sc Kaminski, R.}, {\sc and} {\sc Schaub, T.} 2011.
\newblock Complex optimization in answer set programming: Extended version.
\newblock Available at \cite{metasp}.\par
\newblock ({\em This is an extended version of the paper at hand.})}

\bibitem[\protect\citeauthoryear{Gebser, Kaufmann, Neumann, and Schaub}{Gebser
  et~al\mbox{.}}{2007}]{gekanesc07a}
{\sc Gebser, M.}, {\sc Kaufmann, B.}, {\sc Neumann, A.}, {\sc and} {\sc Schaub,
  T.} 2007.
\newblock Conflict-driven answer set solving.
\newblock In {\em Proceedings of the Twentieth International Joint Conference
  on Artificial Intelligence (IJCAI'07)}, {M.~Veloso}, Ed. AAAI Press/The MIT
  Press, 386--392.

\bibitem[\protect\citeauthoryear{Gelfond and Lifschitz}{Gelfond and
  Lifschitz}{1991}]{gellif91a}
{\sc Gelfond, M.} {\sc and} {\sc Lifschitz, V.} 1991.
\newblock Classical negation in logic programs and disjunctive databases.
\newblock {\em New Generation Computing\/}~{\em 9}, 365--385.

\bibitem[\protect\citeauthoryear{Gries}{Gries}{2004}]{gries04a}
{\sc Gries, M.} 2004.
\newblock Methods for evaluating and covering the design space during early
  design development.
\newblock {\em Integration\/}~{\em 38,\/}~2, 131--183.

\bibitem[\protect\citeauthoryear{Janhunen and Oikarinen}{Janhunen and
  Oikarinen}{2004}]{janoik04a}
{\sc Janhunen, T.} {\sc and} {\sc Oikarinen, E.} 2004.
\newblock Capturing parallel circumscription with disjunctive logic programs.
\newblock In {\em Proceedings of the Ninth European Conference on Logics in
  Artificial Intelligence (JELIA'04)}, {J.~Alferes} {and} {J.~Leite}, Eds.
  Lecture Notes in Computer Science, vol. 3229. Springer-Verlag, 134--146.

\bibitem[\protect\citeauthoryear{Leone, Pfeifer, Faber, Eiter, Gottlob, Perri,
  and Scarcello}{Leone et~al\mbox{.}}{2006}]{dlv03a}
{\sc Leone, N.}, {\sc Pfeifer, G.}, {\sc Faber, W.}, {\sc Eiter, T.}, {\sc
  Gottlob, G.}, {\sc Perri, S.}, {\sc and} {\sc Scarcello, F.} 2006.
\newblock The {DLV} system for knowledge representation and reasoning.
\newblock {\em ACM Transactions on Computational Logic\/}~{\em 7,\/}~3,
  499--562.

\bibitem[\protect\citeauthoryear{Lifschitz}{Lifschitz}{1985}]{lifschitz85b}
{\sc Lifschitz, V.} 1985.
\newblock Computing circumscription.
\newblock In {\em Proceedings of the Ninth International Joint Conference on
  Artificial Intelligence  (IJCAI'85)}, {A.~Joshi}, Ed. Morgan Kaufmann Publishers,
  121--127.

\bibitem[\protect\citeauthoryear{Liu and You}{Liu and You}{2010}]{liuyou10a}
{\sc Liu, G.} {\sc and} {\sc You, J.} 2010.
\newblock Level mapping induced loop formulas for weight constraint and
  aggregate logic programs.
\newblock {\em Fundamenta Informaticae\/}~{\em 101,\/}~3, 237--255.

\bibitem[\protect\citeauthoryear{Lloyd}{Lloyd}{1987}]{lloyd87}
{\sc Lloyd, J.} 1987.
\newblock {\em Foundations of Logic Programming\/}, 2nd ed.
\newblock Symbolic Computation. Springer-Verlag.

\bibitem[\protect\citeauthoryear{McCarthy}{McCarthy}{1980}]{mccarthy80}
{\sc McCarthy, J.} 1980.
\newblock Circumscription --- a form of nonmonotonic reasoning.
\newblock {\em Artificial Intelligence\/}~{\em 13,\/}~1-2, 27--39.

\bibitem[\protect\citeauthoryear{metasp}{metasp}{\!\!}]{metasp}
\textsc{metasp}.
\newblock \texttt{http://www.cs.uni-potsdam.de/wv/metasp}.

\bibitem[\protect\citeauthoryear{Oetsch, P{\"u}hrer, and Tompits}{Oetsch
  et~al\mbox{.}}{2010}]{oeputo10a}
{\sc Oetsch, J.}, {\sc P{\"u}hrer, J.}, {\sc and} {\sc Tompits, H.} 2010.
\newblock Catching the ouroboros: On debugging non-ground answer-set programs.
\newblock 
  {\em Theory and Practice of Logic Programming. Twenty-sixth
  International Conference on Logic Programming (ICLP'10) Special Issue\/}~{\em 10,\/}~4-6, 513--529.

\bibitem[\protect\citeauthoryear{Oikarinen and Janhunen}{Oikarinen and
  Janhunen}{2008}]{oikjan08b}
{\sc Oikarinen, E.} {\sc and} {\sc Janhunen, T.} 2008.
\newblock Implementing prioritized circumscription by computing disjunctive
  stable models.
\newblock In {\em Proceedings of the Thirteenth International Conference on
  Artificial Intelligence: Methodology, Systems, and Applications (AIMSA'08)},
  {D.~Dochev}, {M.~Pistore}, {and} {P.~Traverso}, Eds. Lecture Notes in
  Artificial Intelligence, vol. 5253. Springer-Verlag, 167--180.

\bibitem[\protect\citeauthoryear{Reiter}{Reiter}{1987}]{reiter87b}
{\sc Reiter, R.} 1987.
\newblock A theory of diagnosis from first principles.
\newblock {\em Artificial Intelligence\/}~{\em 32,\/}~1, 57--96.

\bibitem[\protect\citeauthoryear{Sakama and Inoue}{Sakama and
  Inoue}{2000}]{sakino00a}
{\sc Sakama, C.} {\sc and} {\sc Inoue, K.} 2000.
\newblock Prioritized logic programming and its application to commonsense
  reasoning.
\newblock {\em Artificial Intelligence\/}~{\em 123,\/}~1-2, 185--222.

\bibitem[\protect\citeauthoryear{Simons, Niemel{\"a}, and Soininen}{Simons
  et~al\mbox{.}}{2002}]{siniso02a}
{\sc Simons, P.}, {\sc Niemel{\"a}, I.}, {\sc and} {\sc Soininen, T.} 2002.
\newblock Extending and implementing the stable model semantics.
\newblock {\em Artificial Intelligence\/}~{\em 138,\/}~1-2, 181--234.

\bibitem[\protect\citeauthoryear{Syrj{\"a}nen}{Syrj{\"a}nen}{\!\!}]{lparseManual}
{\sc Syrj{\"a}nen, T.}
\newblock Lparse 1.0 user's manual.\par
\newblock Available at {\texttt{http://www.tcs.hut.fi/Software/smodels/lparse.ps.gz}}.

\end{thebibliography}


\end{document}